\DeclareMathAlphabet{\mathsfsl}{OT1}{cmss}{m}{sl}
\def\w{{\mathsfsl w}}
\def\h{{\mathsfsl h}}
\def\H{{\mathbf H\mskip0.5\thinmuskip}}
\def\M{{\mathcal M}}
\def\NN{{\mathbb N}}
\def\F{{\mathbb F}}
\def\FF{{\mathcal F}}
\def\G{{\mathcal G}}
\def\R{{\mathbb R}}
\def\Ing{\mathop{\hbox{\sf Ing}\mskip0.5\thinmuskip}}
\def\dim{\mathop{\hbox{\sf dim}}}
\def\cl{\mathop{\hbox{\sf cl}\mskip0.5\thinmuskip}}
\def\qed{$\square$}
\let\phi\varphi
\newtheorem{theorem}{Theorem}[section]
\newtheorem{corollary}[theorem]{Corollary}
\newtheorem{lemma}[theorem]{Lemma}
\newtheorem{proposition}[theorem]{Proposition}
\theoremstyle{definition}
\newtheorem{definition}[theorem]{Definition}
\newtheorem*{optproblem}{Optimization Problem}
\def\hstrut{\vrule\@width\z@\@depth\z@\@height\ht\strutbox}
\let\oldfrac=\frac
\def\frac#1#2{\oldfrac{\;\hstrut#1\;}{\;\hstrut#2\;}}
\title{\bf Bipartite secret sharing and staircases\footnote{%
This paper is based on, and is an extension of, two unfinished
manuscripts of our late colleague Fero Mat\'u\v s. The bulk of the 
work reported in this paper was done in Prague where Fero hosted the
two other authors.}}
\date{}
\author{Laszlo Csirmaz\thanks{R\'enyi Institute, Budapest and
UTIA, Prague}
\and \fbox{Franti\v sek Mat\'u\v s\thanks{UTIA, Prague}~} 
\and Carles Padr\'o\thanks{Polytechnic University of Catalonia, Barcelona}}
\begin{document}

\maketitle
\begin{abstract}

\noindent Bipartite secret sharing schemes have a bipartite access structure
in which the set of participants is divided into two parts and all
participants in the same part play an equivalent role. Such a bipartite
scheme can be described by a \emph{staircase}: the collection of its minimal
points. The complexity of a scheme is the maximal share size relative to the
secret size; and the $\kappa$-complexity of an access structure is the best lower
bound provided by the entropy method. An access structure is $\kappa$-ideal if it
has $\kappa$-complexity 1. Motivated by the abundance of open problems in
this area, the main results can be summarized as follows. First, a new
characterization of $\kappa$-ideal multipartite access structures is given
which offers a straightforward and simple approach to describe ideal
bipartite and tripartite access structures. Second, the $\kappa$-complexity is
determined for a range of bipartite access structures, including those
determined by two points, staircases with equal widths and heights, and
staircases with all heights 1. Third, matching linear schemes are presented
for some non-ideal cases, including staircases where all heights are 1 and
all widths are equal. Finally, finding the Shannon complexity of a bipartite
access structure can be considered as a discrete submodular optimization
problem. An interesting and intriguing continuous version is defined which
might give further insight to the large-scale behavior of these optimization
problems.

\medskip
\noindent
{\bf Keywords:} cryptography; multipartite secret sharing, entropy method,
linear secret sharing, submodular optimization.

\medskip
\noindent{\bf MSC numbers:} 94A62, 05B35.

\end{abstract}

%%%%%%%%%%%%%%%%%%%%%%%%%%%%%%%%%%%%%%%%%%%%%%%%%%%%%%%%%%%%%%%%%%%%%%%%%

\section{Introduction}
Secret sharing schemes serve as a natural cryptographic primitives used
in group signatures, secure file storage and secure multiparty computation
just to mention a few applications. The initial idea goes back to
Blakley \cite{blakley} and Shamir \cite{shamir}. For an introduction and early bibliography
see \cite{simmons}, for a more recent review see
\cite{beimel-survey} or \cite{blakley-kabaty}.

A \emph{secret sharing scheme}, abbreviated as sss, 
involves a \emph{secret} $0$, the set
$[n]=\{1,2,\dots,n\}$ of $n$ \emph{participants} and an \emph{access structure}
$\Gamma$, which is a nonempty family of subsets of $[n]$ that is closed to supersets
and does not contain the empty set. A participant $i\in [n]$ is
\emph{essential} if some set $I\in\Gamma$ contains $i$ while $I\setminus
i\notin \Gamma$.

In the traditional \emph{probabilistic framework} a secret sharing scheme
consists of jointly distributed random variables $\xi_0,\xi_1, \dots,\xi_n$
taking finitely many values such that $\xi_0$ -- the secret -- is
a function of the vector $\xi_I=\langle\xi_i$, $i\in I\rangle$ almost surely
if and only if $I\in\Gamma$. In an application the dealer samples the
distribution, sets the secret value to be that of $\xi_0$, and communicates
the share $\xi_i$ to participant $i$ privately. The condition ensures that
based on their shares only, participants in $I$ can recover the secret almost
surely if and only if $I\in\Gamma$.

In a \emph{linear framework} the secret sharing scheme consists of subspaces $E_0,E_1,
\dots,E_n$ of some finite linear space $V$ such that $E_0$ is contained in $E_I$
if and only if $I\in \Gamma$, where $E_I$ is the linear span of $\bigcup\{E_i:i\in
I\}$. Any linear sss can be turned into the probabilistic one by setting
$\xi_i$ to be the orthogonal projection of a randomly chosen vector from $V$
to $E_i$.

Widening the usual definition we consider also a \emph{polymatroidal
framework} in which case the scheme consists of a polymatroid $(0\cup[n],h)$
such that $h(I)=h(0\cup I)$ for each $I\in\Gamma$, see, for example,
\cite{padro:optimization}. The scheme $(0\cup[n],h)$ \emph{realizes}
$\Gamma$ if $h(0\cup I)=h(I)$ holds if and only if $I\in\Gamma$.

Polymatroidal schemes cover both probabilistic and linear ones.
In the first case 
the rank $h(I)$ equals the Shannon entropy $\H(\xi_I)$, and 
in the second case $h(I)=\dim(E_I)$. 
Polymatroids obtained this way are called entropic and linear, respectively.

General polymatroidal schemes are not associated with any sort of practical
realization as opposed to probabilistic and linear schemes. Their principal
role is to provide a common platform to investigate the possibilities and
limits of the entropy method using only Shannon information
inequalities \cite{MaPa10}. As polymatroidal schemes cover both
probabilistic and linear ones, results on polymatroidal schemes
(such as lower bounds) automatically carry over to the realizable cases.
From now on, if not mentioned otherwise, all schemes are polymatroidal ones.

The scheme $(0\cup[n],h)$ is \emph{perfect} if $h(0)$ is positive and for
every $I\subseteq [n]$ either $h(0\cup I)=h(0)+h(I)$ or $h(0\cup I)=h(I)$.
Intuitively this means that any collection $I$ of participants either
determines the secret, or has no information on the secret. All schemes in
this paper are assumed to be perfect. Every access structure can be realized
by some perfect linear scheme, see \cite{ito}.

Referring to the probabilistic framework, $h(0)$ is the \emph{secret size}
while $h(i)$ is the \emph{share size} of the participant $i\in[n]$. In a
perfect scheme $h(i)\ge h(0)$ for essential participants $i\in[n]$.

The \emph{information ratio}, or \emph{complexity}, of a scheme $(0\cup[n],h)$ is
the largest share size relative to the secret size. In other words, it is
the maximal value of $h(i)/h(0)$ as $i$ runs over the elements of $[n]$. For
an access structure $\Gamma$ its \emph{Shannon complexity}, denoted by
$\kappa(\Gamma)$, is the infimum of the information ratios of all of its
perfect polymatroidal realizations \cite{MaPa10}. When the minimization is restricted to entropic, or linear
polymatroids, the corresponding infimum is denoted by $\sigma(\Gamma)$, or
$\lambda(\Gamma)$, respectively. It follows that $\lambda(\Gamma) \ge
\sigma(\Gamma) \ge \kappa(\Gamma) \ge 1$. By \cite{Csirmaz1997} we have
$\kappa(\Gamma) \le n$ for all $\Gamma$ and it can happen that
$\kappa(\Gamma) \ge O(n/\log n)$. The infimum is always achieved for
$\kappa(\Gamma)$ (thus it is actually a minimum); while there is an access
structure $\Gamma$ for which $\lambda(\Gamma)=\sigma(\Gamma)=1$ and neither
of these values are taken, see \cite{Matus18}. A sss is \emph{$\kappa$-ideal} if
its Shannon complexity is the smallest possible, namely 1. The access structure
$\Gamma$ is \emph{ideal} if it is realized by some ideal entropic
polymatroid, that is, $\sigma(\Gamma)=1$.
A fundamental result by Brickell and Davenport \cite{brickell-davenport}
essentially states that a $\kappa$-ideal sss is actually a matroid (up
to a scaling factor), which is determined uniquely by the access
structure when all participants are essential. 
Therefore, a $\kappa$-ideal access structure is ideal if and only
if that matroid has an entropic multiple.
Matroid ports, a combinatorial object introduced by Lehman \cite{Leh64},
are basically the same as $\kappa$-ideal access structures.
The characterization of matroid ports by Seymour~\cite{Sey76}
implies that $\kappa(\Gamma) \ge 3/2$ if $\Gamma$ is not 
$\kappa$-ideal~\cite{MaPa10}.

An access structure $\Gamma$ is \emph{threshold} if $I\in\Gamma$ depends
only on the cardinality of $I$. 
Even though only threshold secret sharing schemes
are needed in most applications, 
some situations, as for example hierarchical organizations, 
require access structures in which participants
are divided into several groups according to 
their different roles.
Specifically, $\Gamma$ is
\emph{multipartite}, if $[n]$ can be expressed as a disjoint union of $m\ge
1$ sets $N_1$, \dots, $N_m$ such that $I\in\Gamma$ is
determined solely through the cardinalities of $I\cap N_1$ through $I\cap
N_m$. For $m=2$ such an access structure is called \emph{bipartite}. If $n_1$,
$n_2$ denotes the number of elements in $N_1$ and $N_2$, respectively, a
bipartite $\Gamma$ gives rise to an integer $\ell=\ell_\Gamma\ge 1$ and two
sequences of integers
\begin{equation}\label{eq:1}
 0\le i_1< \cdots < i_\ell\le n_1
  ~~~ \mbox{ and } ~~~
 n_2\ge j_1 > \cdots > j_\ell \ge 0,
\end{equation}
such that $I\in\Gamma$ is equivalent to $|I\cap N_1|\ge i_k$ and $|I\cap
N_2|\ge j_k$ for some $1\le k \le \ell$. The sequence $(i_1,j_1)$, \dots,
$(i_\ell,j_\ell)$ is a \emph{staircase} determining $\Gamma$, having steps
of width $\w_k=i_{k+1}-i_k$, and heights $\h_k=j_k-j_{k+1}$.
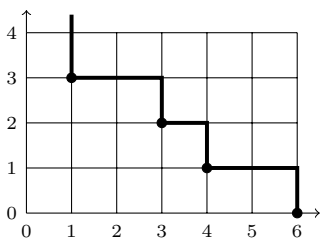
\begin{figure}[htb]
\begin{center}\begin{tikzpicture}[scale=0.6]
\draw(0,0) grid (6,4);
\draw[->] (0,4)--(0,4.5); \draw[->] (6,0)--(6.5,0);
\foreach\x in{0,...,6}{\draw (\x,-0.07) node[below]{$\scriptstyle\x$};}
\foreach\y in{0,1,2,3,4}{\draw(0,\y) node[left]{$\scriptstyle\y$};}
\draw[fill] (1,3) circle(3pt) (3,2) circle(3pt)
 (4,1) circle(3pt)   (6,0) circle(3pt);
\draw[line width=1.5pt](1,4.4)--(1,3)--(3,3)--(3,2)--(4,2)--(4,1)--(6,1)--(6,0);
\end{tikzpicture}\end{center}
\vskip -18pt

\caption{An example staircase}\label{fig:staircase-example}
\end{figure}
The staircase is \emph{regular} if all widths are the same and all heights
are the same. Figure \ref{fig:staircase-example} illustrates a staircase of
length $\ell_\Gamma=4$ for $\Gamma$ bipartite with $n_1=6$ and $n_2=4$. The
widths of the steps are 2, 1, 2, and the heights are 1, 1, 1. We refer the
reader to the works \cite{padro:idealmulti, padro:optimization,
padro:bipartite} for further motivation, basic definitions, and a more
gentle introduction to this topic.

\subsection*{Main results}

Motivated by the open problems raised in \cite{padro:optimization}, the main
results of this paper can be summarized as follows.

a) In Section \ref{sec:multipartite}
a complete characterization of $\kappa$-ideal multipartite access structures
is presented.
It simplifies both the description and the proof of the
characterization in~\cite[Theorem~5.3]{padro:idealmulti}.
 Using the fact that in the bipartite and tripartite cases
ideal and $\kappa$-ideal access structures coincide,
we recover the results in~\cite{padro:idealmulti} about
ideal bipartite and tripartite access structures. While they used
detailed case by case analysis, our result leads to the same collection of 
access structures in a simpler way, explaining the occurence of exceptional
cases. The method is illustrated for the bipartite case; the similar procedure
for the tripartite case is left to the interested reader.
In addition, by using Ingleton inequality,
we present a much simpler proof for the fact that
ideal and $\kappa$-ideal tripartite access structures coincide.
We hope that the general characterization
gives further insight into these interesting families.

b) In Section \ref{sec:shannon-compl} the Shannon complexity
$\kappa(\Gamma)$ is determined
for a range of bipartite access structures including those determined by two
points (that is, when $\ell_\Gamma=2$), regular staircases with equal width and
height, and staircases where all heights are $\h_k=1$ and the width sequence
satisfies an additional technical assumption. 
Shannon complexity is the best lower bound on $\sigma(\Gamma)$ implied by
the Shannon information inequalities. Computing $\kappa(\Gamma)$ is a linear
optimization problem with linear constraints. The number of constraints, 
however, is exponential in the number of participants. The exact value of
the Shannon complexity was known only for a few infinite families of graph-based 
access structures \cite{bssv,Cs-Tardos,Cs-cubes}. Exploiting the internal
symmetry of bipartite access structures their Shannon complexity is
expressed as the solution of another linear optimization problem where the
number of constraints is at most quadratic in the number of participants.
The solution of the reduced optimization problem is determined for the above
bipartite access structures using the duality theorem of linear programming.
While it is known that the Shannon complexity cannot exceed the number $n$ of
participants \cite{Csirmaz1997}, it is an open problem whether there is a
positive constant $c>0$ such that $\kappa(\Gamma)>c\cdot n$ for some
bipartite access structure $\Gamma$ on $[n]$ for infinitely many $n$.

c) In a few cases linear schemes were found matching the corresponding
Shannon bound, such as all regular staircases of height $1$. The
constructions are presented in Section \ref{sec:linear-examples}.

d) Determining the Shannon complexity of bipartite access structures can be
considered as discrete submodular optimization, see \cite{fujishige,hazan-kale}.
In Section \ref{sec:continuous} a corresponding continuous optimization
problem is defined. The intuition is scaling down the increasing
optimization problems so that constraints separating qualified and
unqualified subsets converge to constraints along a continuous curve. 
Results on the this continuous optimization problem could give a hint on the
large scale behavior of bipartite access structures.
For the continuous case, a local lower bound on the optimal value is proved
which is tight in certain cases. This section concludes with some open
problems in this framework.

%%%%%%%%%%%%%%%%%%%%%%%%%%%%%%%%%%%%%%%%%%%%%%%%%%%%%%%%%%%%%%%%%%%%%%

\section{Ideal multipartite secret sharing revisited}\label{sec:multipartite}

Ideal multipartite secret sharing schemes received a considerable attention,
see \cite{padro:idealmulti} and the references therein. This section
provides a characterization of $\kappa$-ideal multipartite access
structures. It is indicated how this characterization can be used to
generate all ideal bipartite and tripartite access structures. The main tool
is the fundamental result by Brickell and Davenport \cite{brickell-davenport}, 
namely, a $\kappa$-ideal sss is a matroid (up
to a scaling factor), and this matroid is determined uniquely by the access
structure. Moreover a $\kappa$-ideal access structure is ideal if and only
if this matroid has an entropic multiple.

\smallskip

Let $(M,f)$ be an integer polymatroid which is linearly representable by
subspaces of a finite dimensional vector space over the finite field $\F$. As the same
representation works over any extension of $\F$, we may assume $\F$ to be
arbitrarily large. Fix $k\in M$, and let $E_k$ be the subspace of dimension
$f(k)$ corresponding to $k$ witnessing the linear representability. Choose a
``generic'' vector $u\in E_k$ which is not in any proper subspace of $E_k$
cut off by the subspaces $E_J$ for $J\subseteq M$. Each such requirement
discards at most $|\F|^{f(k)-1}$ elements of $E_k$, thus such a generic
vector exists whenever $\F$ is large enough. It is clear that for any
$J\subseteq M$ the linear span of $u\cup E_J$ has dimension one more than
the dimension of $E_J$ except when $E_k$ is a subspace of $E_J$. This
motivates the following definition. Fix $k\in M$, $u\notin M$, and extend
the rank function $f$ to subsets of $u\cup M$ as follows: for every
$J\subseteq M$ let
$$
    f(uJ) = \begin{cases}
           f(J)   & \mbox{ if } f(kJ)=f(J),\\
           f(J)+1 & \mbox{ otherwise}.
    \end{cases}
$$
The clearly integer polymatroid $(u\cup M,f)$ is called the \emph{generic extension
of $f$ along $k\in M$}. The above discussion shows that whenever $(M,f)$ is
linearly representable then so is this generic extension.

\smallskip

Let $(0\cup[n],h)$ be a $\kappa$-ideal sss realizing the multipartite access
structure $\Gamma$ with partition $[n]=N_1\cup\dots\cup N_m$. By the
result of Brickell and Davenport \cite{brickell-davenport} $h$ can be
assumed to be a matroid which is invariant under every permutation $\pi$ of $[n]$ that keeps $N_k$ fixed for each $k \in [m]$. Let $([n],h')$ be the
restriction of $h$ discarding the secret 0. As $h$ is a one-point
extension of $h'$, it is determined uniquely by the \emph{modular cut} (see,
e.g., \cite{oxley})
\begin{equation}\label{eq:N}
   \FF=\{ F\subseteq [n]: \mbox{ $F$ is a flat in $([n],h')$ and }h(F)=h(0F) \}.
\end{equation}
\begin{lemma}\label{lemma:fero1}
If $F$ is a minimal flat in $\FF$ and $k\in[m]$, 
then either $N_k\subseteq F$
or $F \cap N_k = \emptyset$.
\end{lemma}
\begin{proof}
Suppose that $F\cap N_k$ is neither empty nor equals $N_k$. 
Let $\pi$ be the permutation of $[n]$ which swaps only two
elements of $N_k$, one in $F\cap N_k$ and the other in $N_k\setminus F$. As
the matroid $h$ is invariant under $\pi$, both $F$ and $\pi(F)$ 
are flats with the same rank and $\pi(F)\in\FF$. Observe that
$F$ and $\pi(F)$ form a modular pair. This is so as
$F\cup\pi(F)$ has one element more than $F$, and its rank is strictly bigger
than that of $F$ (as $F$ is a flat), thus equals $h(F)+1$. Similarly,
$F'=F\cap \pi(F)$ has one element less than $F$, and its rank must be
strictly smaller than the rank of $F$ (as $F'$ is an intersection of two
different flats), thus $h(F')=h(F)-1$. As both $F$ and $\pi(F)$ are in $\FF$,
their intersection, $F'$ is in $\FF$ as well. 
That contradicts the assumption that $F$ is minimal in $\FF$. 
\end{proof}

\begin{lemma}\label{lemma:fero2}
Let $i\in N_k$ and $i\notin J\subseteq 0\cup[n]$. If $h(J)\not=h(J\cup N_k)$,
then $h(iJ)=h(J)+1$.
\end{lemma}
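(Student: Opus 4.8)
The plan is to translate the statement into the language of the matroid closure operator $\cl$ of $h$ and then exploit the permutation-symmetry of $h$ within each part $N_k$. First I would record two elementary reformulations. Since $h$ is (a multiple of) a matroid rank function, adding a single element changes the rank by $0$ or $1$, so the conclusion $h(iJ)=h(J)+1$ is equivalent to $i\notin\cl(J)$. Likewise, $h(J\cup N_k)=h(J)$ holds precisely when $N_k\subseteq\cl(J)$, so the hypothesis $h(J)\neq h(J\cup N_k)$ is exactly the statement $N_k\not\subseteq\cl(J)$. Thus the lemma becomes: if $N_k\not\subseteq\cl(J)$ then $i\notin\cl(J)$, and I would prove the contrapositive, namely that $i\in\cl(J)$ forces $N_k\subseteq\cl(J)$.

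So assume $i\in\cl(J)$, i.e.\ $h(iJ)=h(J)$, with $i\in N_k$ and $i\notin J$. The key observation is that any other element $i'\in N_k\setminus J$ can be compared with $i$ by symmetry. Let $\pi$ be the transposition of $N$ that swaps $i$ and $i'$ and fixes everything else. Since both $i$ and $i'$ lie outside $J$ we have $\pi(J)=J$, and since $i,i'$ lie in the same part $N_k$ the permutation keeps every $N_j$ invariant and fixes the secret $0\notin N$. Hence $\pi$ is one of the permutations under which the matroid $h$ is invariant, and therefore $h(i'J)=h(\pi(iJ))=h(iJ)=h(J)$, giving $i'\in\cl(J)$. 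Consequently every element of $N_k\setminus J$ lies in $\cl(J)$; together with the trivial inclusion $J\cap N_k\subseteq J\subseteq\cl(J)$ this yields $N_k\subseteq\cl(J)$, which is exactly the claim of the contrapositive and contradicts the hypothesis once we run the argument forward.

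The one place requiring care — and the only genuine obstacle — is that the swapping permutation is available only when $N_k\setminus J$ contains at least two elements. I would therefore dispose of the degenerate case $N_k\setminus J=\{i\}$ first: there $iJ=J\cup N_k$, so the hypothesis $h(J)\neq h(J\cup N_k)$ reads $h(iJ)\neq h(J)$, and monotonicity of the matroid rank immediately forces $h(iJ)=h(J)+1$, which is the desired conclusion. In the remaining case $|N_k\setminus J|\geq 2$ the symmetry argument above applies verbatim. Beyond this case split the proof is essentially a one-line use of invariance, so the difficulty is conceptual (recognizing that the hypothesis and conclusion are closure statements and that $N_k\setminus J$ forms a symmetry orbit relative to $J$) rather than computational.
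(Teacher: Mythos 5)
Your proof is correct and follows essentially the same route as the paper's: argue the contrapositive, assume $h(iJ)=h(J)$, use the invariance of $h$ under permutations fixing each part $N_k$ to conclude $h(i'J)=h(J)$ for every $i'\in N_k\setminus J$, and accumulate to get $h(J\cup N_k)=h(J)$. The paper's one-line proof is just a terser version of this; your explicit transposition and the separate treatment of the case $N_k\setminus J=\{i\}$ (where the conclusion follows directly anyway) are harmless elaborations, not a different argument.
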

\begin{proof}
Suppose $h(iJ)=h(J)$. By the multipartite symmetry the same equality
holds for every $i\in N_k\setminus J$, and then $h(J)=h(J\cup N_k)$.
\end{proof}

For $I\subseteq 0\cup[m]$ the set $\bigcup\{N_i:i\in I\}$ is denoted by
$N_I$ where we take $N_0=\{0\}$. 

\begin{lemma}\label{lemma:indep-set}
$J\subseteq 0\cup[n]$ is independent in $(0\cup[n],h)$ if and only if $|J\cap N_I|\le
h(N_I)$ for all $I\subseteq 0\cup[m]$.
\end{lemma}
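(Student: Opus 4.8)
The plan is to prove the two implications separately, the forward one being routine and the reverse one carrying all the content. Throughout I would note that, since $J\subseteq N$ does not contain the secret, $|J\cap N_I|=|J\cap N_{I\setminus 0}|$ while $h(N_{I\setminus 0})\le h(N_I)$, so the constraints indexed by $I\ni 0$ are implied by those with $I\subseteq[m]$; it therefore suffices to work with $I\subseteq[m]$. The forward direction is then immediate from monotonicity: if $J$ is independent, every subset $J\cap N_I\subseteq J$ is independent too, so $|J\cap N_I|=h(J\cap N_I)\le h(N_I)$.

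For the converse I would fix a maximal independent subset $B\subseteq J$, so that $h(J)=|B|$, and aim to show $B=J$. Set $I^*=\{k\in[m]:N_k\subseteq\cl(B)\}$, the parts entirely spanned by $B$. A first observation is that for every $k\notin I^*$ one has $J\cap N_k\subseteq B$: an element $x\in(J\cap N_k)\setminus B$ lies in $\cl(B)$, so $h(Bx)=h(B)$, and the contrapositive of Lemma~\ref{lemma:fero2} forces $N_k\subseteq\cl(B)$, i.e.\ $k\in I^*$, a contradiction.

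The heart of the argument is the claim that $B\cap N_{I^*}$ already spans $N_{I^*}$, so that $h(N_{I^*})=|B\cap N_{I^*}|$. I would argue by contradiction: take $x\in N_{I^*}$ with $x\notin\cl(B\cap N_{I^*})$ and let $C$ be its fundamental circuit over $B$. Since $x$ is not spanned by $B\cap N_{I^*}$, the circuit cannot lie inside $N_{I^*}$, so it contains an element $b\in N_{l'}$ with $l'\notin I^*$; as $l'$ is unspanned, pick $w\in N_{l'}\setminus\cl(B)$. Writing $S=(B\setminus b)\cup\{x\}$, the circuit property gives $b\in\cl(C\setminus b)\subseteq\cl(S)$, hence $h(bS)=h(S)$, and Lemma~\ref{lemma:fero2} yields $N_{l'}\subseteq\cl(S)\subseteq\cl(B\cup x)=\cl(B)$, using $x\in\cl(B)$. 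In particular $w\in\cl(B)$, contradicting the choice of $w$. This ``all-or-nothing'' use of the multipartite symmetry, encapsulated in Lemma~\ref{lemma:fero2}, is exactly the point where the symmetry of the matroid is indispensable: the analogous statement fails for general matroids, and I expect this spanning claim to be the main obstacle to pin down cleanly.

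With the crux in hand the proof closes quickly. The hypothesis applied to $I=I^*$ gives $|J\cap N_{I^*}|\le h(N_{I^*})=|B\cap N_{I^*}|$, while trivially $|B\cap N_{I^*}|\le|J\cap N_{I^*}|$ since $B\cap N_{I^*}\subseteq J\cap N_{I^*}$; hence $J\cap N_{I^*}=B\cap N_{I^*}$, so $J\cap N_k\subseteq B$ for every $k\in I^*$ as well. Combined with the inclusion established for $k\notin I^*$, this shows $J\subseteq B$, and therefore $J=B$ is independent.
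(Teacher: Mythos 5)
Your proof is correct, but it follows a genuinely different route from the paper's. The paper disposes of the lemma in one line: necessity is declared clear, and sufficiency is handled by induction --- immediate for $m=1$ (where the symmetry forces the restriction of $h$ to $N$ to be uniform), with the inductive step passing to the minor $(N,h)\setminus J\cap N_m$ and letting the hypothesis descend to fewer parts/elements. You instead give a direct, non-inductive argument: fix a maximal independent $B\subseteq J$, isolate the set $I^*$ of parts spanned by $B$, and use fundamental circuits plus two applications of Lemma~\ref{lemma:fero2} to show (i) parts outside $I^*$ contribute nothing to $J\setminus B$, and (ii) $B\cap N_{I^*}$ already spans $N_{I^*}$, so the single cardinality constraint at $I=I^*$ forces $J=B$. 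Both invocations of Lemma~\ref{lemma:fero2} are legitimate (in each case the chosen element lies outside the reference set, as the lemma requires), the circuit exchange $b\in\cl(C\setminus b)\subseteq\cl(S)\subseteq\cl(B)$ is standard, and the endgame $|J\cap N_{I^*}|\le h(N_{I^*})=|B\cap N_{I^*}|\le|J\cap N_{I^*}|$ is airtight; the degenerate cases $I^*=\emptyset$ and $I^*=[m]$ also go through. What your approach buys: a complete, self-contained proof that makes explicit exactly where the multipartite symmetry is indispensable (your remark that the statement fails for general matroids is right), and that shows the hypothesis is needed only for the single index set $I^*$. What the paper's approach buys: brevity and a reduction to smaller instances requiring no circuit machinery --- though as printed it is only a sketch, leaving the reader to verify that the symmetric multipartite structure and the rank conditions are inherited by the minor used in the induction.
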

\begin{proof}
The condition is clearly necessary. Sufficiency is immediate for $m=1$
as in this case $(0\cup[n],h)$ is the uniform matroid.
Otherwise let $B=J\cap N_m$. If $B=\emptyset$, then use induction on the 
matroid restricted to $0\cup[n]\setminus N_m$. If $B\neq\emptyset$, then
$|B| \le h(N_m)$ by assumption, thus $h(B)=|B|$ by the multipartite
symmetry. From here induction on the contraction 
$(0\cup[n],h)\setminus N_m$
gives the claim of the lemma.
\end{proof}
Since the collection of independent sets determines the matroid \cite{oxley},
a consequence of this lemma is that 
the matroid $(0\cup[n],h)$ is uniquely determined by the ranks
$\{h(N_I): I\subseteq 0\cup[m]\}$.

\begin{lemma}\label{lemma:1-1}
For a partition 
$[n] = N_1 \cup \cdots \cup N_m$, 
there is a one-to-one correspondence between 
the $\kappa$-ideal $m$-partite sss $(0\cup[n],h)$ and the pairs 
$\langle ([m],f'), \M \rangle$, 
where $([m],f')$ is an integer polymatroid 
with $f(k) \le |N_k|$ for each $k \in [m]$
and $\M$ is a modular cut in $([m],f')$.
\end{lemma}
\begin{proof}
Consider the map $\phi$ from $0\cup[n]$ to $0\cup[m]$
defined by $\phi(0)=0$ and $\phi(i)=k$ whenever $i\in N_k$.
Let $(0\cup[n],h)$ be a $\kappa$-ideal $m$-partite sss
for the given partition.
The \emph{factor} of $(0\cup[n],h)$ by $\phi$ is the (integer) polymatroid on the ground set
$0\cup [m]$ with the rank function $f(I)= h(\phi^{-1}(I))$. In
particular, $f(0)=1$ and $f(k)=h(N_k)\le |N_k|$ for $k\in[m]$.
Let $([m],f')$ be the restriction of this polymatroid to $[m]$, and
$\M=\phi(\FF)$ where $\FF$ is the modular cut in (\ref{eq:N}). 
By Lemma \ref{lemma:fero1} $\M$ is a modular cut in
$([m],f')$; this defines the corresponding 
integer polymatroid and modular cut.
In the other direction, take the integer polymatroid $([m],f')$ and the
modular cut $\M$. Let the corresponding one-element extension be
$(0\cup[m],f)$, namely
\begin{equation}\label{eq:ext}
    f(0J) = \begin{cases} f'(J) & \mbox{if $\cl(J)\in \M$,}\\
                               f'(J)+1 &\mbox{otherwise,}
                 \end{cases}
\end{equation}
where $\cl(J)$ is the closure of $J$ in $([m],f')$. 
For the chosen partition of $[n]$, 
take any $m$-partite secret sharing matroid
$(0\cup[n],h)$ such that its $\phi$-factor is $(0\cup[m],f)$.
According to Lemmas \ref{lemma:fero2} and \ref{lemma:indep-set} the ranks of
$(0\cup[n],h)$ are determined uniquely, thus there is at most one such
matroid. To show the existence, starting from $(0\cup[m],f)$ 
take $|N_i|$ generic extensions repeatedly along $i$ for each $i\in[m]$, and
then restrict the final extension to $0\cup[n]$. It is easy to check that 
it has the desired properties.
\end{proof}

The fact that every integer polymatroid is a factor of a matroid goes back
to T.~Helgason \cite{helgason}. A similar construction using a completely
different setting appeared in \cite{bergman-fan}.

The correspondence expressed in Lemma \ref{lemma:1-1} can be turned into a procedure
which enumerates all $\kappa$-ideal access structures. The correctness of
the procedure is immediate from the lemma.

\begin{theorem}\label{thm:generating-kappa-ideal}
The procedure outlined below generates all $\kappa$-ideal $m$-partite
access structures on $[n]=N_1 \cup \cdots \cup N_m$.
\end{theorem}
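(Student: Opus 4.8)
The plan is to turn the construction above into an explicit procedure and then verify the two halves of the claim separately: that every structure it outputs is $\kappa$-ideal and $m$-partite (soundness), and that every $\kappa$-ideal $m$-partite structure on the given partition is produced (completeness). The procedure I would record is the following. First, choose any integer polymatroid $(M,f')$ on $M=[m]$ subject only to $f'(k)\le n_k$, where $n_k=|N_k|$, together with a modular cut $\M$ of $(M,f')$; form the one-element extension $(0\cup M,f)$ with $f(0)=1$ determined by $\M$. Second, expand $(0\cup M,f)$ to $(0\cup N,h)$ by replacing each point $k$ with a part $N_k$ of $n_k$ generic rank-one vectors of $E_k$ --- that is, by iterating the generic extension along $k$ --- so that the restriction of $h$ to $N_k$ is the uniform matroid $U_{f(k),n_k}$ while the secret $0$ is left untouched. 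The structure realized by $(0\cup N,h)$ is the output.

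For soundness I would argue that each generic extension adjoins a rank-one element and $f(0)=1$, so every singleton of $0\cup N$ has rank $1$ and $(0\cup N,h)$ is in fact a matroid; the constraint $f'(k)\le n_k$ is exactly what makes the uniform blow-up $U_{f(k),n_k}$ on each part legitimate. By construction $h$ is invariant under every permutation fixing the parts, so the realized structure is $m$-partite, and since $h(0)=1$ a one-element matroid extension automatically satisfies $h(0\cup I)\in\{h(I),h(I)+1\}$, making the scheme perfect. An ideal matroid realization certifies $\kappa=1$, so the output is genuinely a $\kappa$-ideal $m$-partite access structure.

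For completeness I would start from an arbitrary $\kappa$-ideal $m$-partite $\Gamma$ and invoke Brickell--Davenport \cite{brickell-davenport} to realize it by a matroid $(0\cup N,h)$, taken partition-symmetric. Factoring along $\phi$ produces an integer polymatroid $(0\cup M,f)$ with $f(0)=1$ and $f(k)=h(N_k)\le n_k$, so the numerical constraint holds automatically, and Lemma~\ref{lemma:fero1} shows that $\M=\{I\subseteq M:N_I\in\N\}$ is a modular cut of $(M,f')$, locating the secret at the factor level. It then remains to check that feeding this $(0\cup M,f)$ into the expansion step returns precisely the matroid we began with; this is delivered by Lemma~\ref{lemma:fero2}, which reads off singleton extensions from part-extensions, and Lemma~\ref{lemma:indep-set}, which reconstructs the full independent-set family, hence $h$ and hence $\Gamma$, from the part-ranks $h(N_I)$.

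The step I expect to be the main obstacle is this last reconstruction: showing that factoring and expanding are mutually inverse, i.e.\ that the generic expansion yields the \emph{unique} symmetric matroid with the prescribed part-ranks rather than some rival matroid sharing the same factor and modular cut. The danger is that distinct $m$-partite matroids could collapse to the same $(0\cup M,f)$ and $\M$; ruling this out is exactly the force of Lemma~\ref{lemma:indep-set}, which must be used to argue that the ranks $h(N_I)$ together with $\M$ pin down every value $h(J)$, and therefore $\Gamma$, without ambiguity. Once this uniqueness is secured, soundness and completeness combine to give that the procedure generates all $\kappa$-ideal $m$-partite access structures.
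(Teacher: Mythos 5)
Your proposal follows essentially the same route as the paper: Brickell--Davenport to get a partition-symmetric matroid, factoring by $\phi$ to obtain $(0\cup M,f)$ with the modular cut supplied by Lemma~\ref{lemma:fero1}, and then Lemmas~\ref{lemma:fero2} and~\ref{lemma:indep-set} to show the part-ranks $h(N_I)$ pin down the matroid uniquely, so that generic expansion inverts the factoring. The uniqueness issue you flag as the main obstacle is exactly the point the paper settles with Lemma~\ref{lemma:indep-set}, so your plan is correct and matches the paper's argument.
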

\begin{enumerate}
\item Take any integer polymatroid $([m],f')$ with $f'(k)\le
|N_k|$, and take a modular cut $\M$ in $([m],f')$.
\item Let $(0\cup[m],f)$ be the
corresponding one-point extension as defined in (\ref{eq:ext}).
\item Starting from $(0\cup[m],f)$ add $N_i$ generic elements along
$i$ for each $i\in [m]$. Restrict the final polymatroid to $0\cup[n]$. 
The result is a matroid $(0\cup[n],h)$; it is $\kappa$-ideal, $m$-partite, and
the corresponding access strcuture is $\{ I\subseteq [n]: h(I)=h(0I) \,\}$.
\end{enumerate}

Note that if the polymatroid $(0\cup[m],f)$ is linearly representable, then
the same applies to the matroid ($0\cup[n],h)$. Consequently the corresponding
access structure can be realized by an ideal linear sss.

\subsection{Ideal bipartite access structures}\label{subsec:ideal-bipartite}

For bipartite access structures the procedure of Theorem
\ref{thm:generating-kappa-ideal} can be detailed as follows.
Take an integer polymatroid $(M,f')$ on the two-element set $M=\{1,2\}$. The
polymatroid
$(M,f')$ is determined by the integer ranks $a=f'(1)$, $b=f'(2)$ and $c=f'(12)$,
where $c\le a+b$ (here $12$ is the two-element set $\{1,2\}$). Assume
neither $a$ nor $b$ is zero and $a,b<c$ (thus $1$, $2$ and $12$ are
all flats).
If $c<a+b$, then $(M,f')$ has four non-trivial modular cuts:
\begin{align*}
  \M_1 &= \{ 1, 12 \},  & \M_2 &= \{ 2, 12 \}, \\
  \M_3 &= \{ 12 \},     & \M_4 &= \{1, 2, 12 \},
\end{align*}
If $c=a+b$, then $\{1,2\}$ is a modular pair, thus $\M_4$ is not a
modular cut. The one-point extensions $(\{0,1,2\},f)$ are
integer polymatroids on three elements, consequently they are linearly
representable; see \cite{matus-studeny}. The generic extensions created in step 3
are also linearly representable, thus every $\kappa$-ideal bipartite access
structure admits an ideal linear sss.

Let us compute the ranks in the generic extension $(0\cup N_1\cup
N_2,h)$. For $I_1\subseteq N_1$ and $I_2\subseteq N_2$ we have
\begin{align*}
  h(I_1) &= \min\{ |I_1|,a \}, \\
  h(I_2) &= \min\{ |I_2|,b \}, \\
  h(I_1\cup I_2) &=\min\{h(I_1)+h(I_2),c\}.
\end{align*}
If $f$ was generated by $\M_1$, then
\begin{alignat*}{2}
  h(0\cup I_1) &=h(I_1)  &&~\Longleftrightarrow ~ h(I_1)=a, \\
  h(0\cup I_2) &=h(I_2)+1 &&~\\
  h(0\cup I_1\cup I_2) &=h(I_1\cup I_2) &&~\Longleftrightarrow ~ h(I_1)=a
  \mbox{ or } h(I_1\cup
  I_2)=c,
\end{alignat*}
and similarly for the other cases. In summary, the access structures 
corresponding to  the modular cuts $\M_1,\M_2,\M_3,\M_4$ are:
\begin{align*}
 \Gamma_1& =\{I_1\cup I_2:~  |I_1|\ge a \mbox{ or }
     ( |I_1|\ge c-b \mbox{ and } |I_1|+|I_2|\ge c ) \},\\
 \Gamma_2&=\{I_1\cup I_2:~ |I_2|\ge b \mbox{ or }
     ( |I_2|\ge c-a \mbox{ and } |I_1|+|I_2|\ge c ) \},\\
 \Gamma_3&=\{I_1\cup I_2:~ |I_1|\ge c-b \mbox{ and }
      |I_2|\ge c-a \mbox{ and }  |I_1|+|I_2|\ge c \}, \\
 \Gamma_4&=\{I_1\cup I_2: ~
      |I_1|\ge a \mbox{ or } |I_2|\ge b \mbox{ or } |I_1|+|I_2|\ge c\}.
\end{align*}
If $c=a+b$ then $\Gamma_4$ is missing as it would be the same access
structure which is generated by $a$, $b$, and $c-1$. Figure \ref{fig:ideal2}
illustrates the four types of ideal bipartite access structures. Qualified
subsets correspond to the lattice points in the shaded area.

\begin{figure}[htp]
\begin{center}\begin{tikzpicture}[scale=0.3]
\begin{scope}[xshift=0cm]
\fill[gray!20] (1,5.4)--(1,4)--(3,2)--(3,0)--(5.4,0) to[out=90,in=0]
(1,5.4);
\draw[dotted,very thin] (0,4)--(1,4);
\draw[->] (0,0)--(0,5.5);
\draw[->] (0,0)--(5.5,0);
\draw[thin] (0,5)--(5,0);
\draw (0,4) node[left] {$b$} (3,0) node[below] {$a$} (2,2.2) node {$c$};
\draw[very thick] (1,5.4)--(1,4)--(3,2)--(3,0)--(5.4,0);
\draw (0.2,0.3) node [below left] {$\Gamma_1$};
\end{scope}
\begin{scope}[xshift=9cm]
\fill[gray!20] (0,5.4)--(0,4)--(1,4)--(3,2)--(5.4,2) to[out=90,in=0]
(0,5.4);
\draw[dotted,very thin] (3,0)--(3,2);
\draw[->] (0,0)--(0,5.5);
\draw[->] (0,0)--(5.5,0);
\draw[thin] (0,5)--(5,0);
\draw (0,4) node[left] {$b$} (3,0) node[below] {$a$} (2,2.2) node {$c$};
\draw[very thick] (0,5.4)--(0,4)--(1,4)--(3,2)--(5.4,2);
\draw (0.2,0.3) node [below left] {$\Gamma_2$};
\end{scope}
\begin{scope}[xshift=18cm]
\fill[gray!20](1,5.4)--(1,4)--(3,2)--(5.4,2) to[out=90,in=0] (1,5.4);
\draw[dotted,very thin] (0,4)--(1,4);
\draw[dotted,very thin] (3,0)--(3,2);
\draw[->] (0,0)--(0,5.5);
\draw[->] (0,0)--(5.5,0);
\draw[thin] (0,5)--(5,0);
\draw (0,4) node[left] {$b$} (3,0) node[below] {$a$} (2,2.2) node {$c$};
\draw[very thick] (1,5.4)--(1,4)--(3,2)--(5.4,2);
\draw (0.2,0.3) node [below left] {$\Gamma_3$};
\end{scope}
\begin{scope}[xshift=27cm]
\fill[gray!20] (0,5.4)--(0,4)--(1,4)--(3,2)--(3,0)--(5.4,0) to[out=90,in=0]
(0,5.4);
\draw[->] (0,0)--(0,5.5);
\draw[->] (0,0)--(5.5,0);
\draw[thin] (0,5)--(5,0);
\draw (0,4) node[left] {$b$} (3,0) node[below] {$a$} (2,2.2) node {$c$};
\draw[very thick] (0,5.4)--(0,4)--(1,4)--(3,2)--(3,0)--(5.4,0);
\draw (0.2,0.3) node [below left] {$\Gamma_4$};
\end{scope}
\end{tikzpicture}\end{center}

\kern-14pt
\caption{Ideal bipartite access structures corresponding to the modular cuts}\label{fig:ideal2}
\end{figure}

\subsection{Ideal tripartite access structures}

Tripartite $\kappa$-ideal access structures can be generated similarly to
the bipartite case. One starts from an integer polymatroid $(M,f')$ on three
elements, extends it to $(0\cup M,f)$ using a modular cut, and then adds
generic elements. As $(M,f')$ is on three elements, it is linearly
representable. We claim that $(0\cup M,f)$ is also linearly representable,
thus all tripartite $\kappa$-ideal access structures are, in fact, ideal.
This claim has been proved first in \cite[Theorem 19]{padro:idealmulti}.

An integer polymatroid on four elements $\{a,b,c,d\}$ has a linearly 
representable multiple if and only if it satisfies all instances of the Ingleton
inequality $\Ing(a,b,c,d)\ge 0$, see \cite{matus-studeny}. The Ingleton expression
is a linear combination of ten ranks as follows \cite{ingleton}:
\begin{align}\label{eq:ingleton}
  \Ing(a,b,c,d)&=-f(a)-f(b)-f(cd)-f(abc)-f(abd)+{}\\
               & ~~~~{}+f(ab)+f(ac)+f(ad)+f(bc)+f(bd),\nonumber
\end{align}
where, as usual, brackets around singletons and the union signs are omitted. 
The Ingleton expression is invariant for swapping the first pair and
the second pair of arguments, respectively, which means that it has six different
instances.
The following inequalities hold in every polymatroid:
\begin{align*}
\Ing(a,b,c,d) &+ f(a)+f(c)-f(ac)\ge 0, \\
\Ing(a,b,c,d) &\ge f(a) - f(ac), \\
\Ing(a,b,c,d) &\ge f(c)-f(ac).
\end{align*}
For example, the first inequality can be written equivalently as
$$
    \delta(ab,bc) + \delta(ad,bd)+\delta(c,d)\ge 0,
$$
where $\delta(I,J)=f(I)+f(J)-f(I\cup J)-f(I\cap J)$ is the non-negative
modular defect of $I$ and $J$. Similar rearrangements work for the other two
inequalities.

Let us return to the claim that $(0\cup M,f)$ is linearly representable. Due
to the symmetry of the Ingleton expression we can assume that the secret $0$
is either $a$ or $c$. As $f$ is
integer, $\Ing(a,b,c,d)<0$ means $\Ing(a,b,c,d)\le -1$, and then
$f(a)+f(c)-f(ac)\ge 1$ by the first inequality. Also, $f(0)=1$ implies that
either $f(a)=1$ or $f(c)=1$, and then either $f(a)-f(ac)\ge 0$ or
$f(c)-f(ac)\ge 0$. In both cases $\Ing(a,b,c,d)\ge 0$ according to the
second and third inequality. Consequently all Ingleton expressions are
non-negative proving that $(0\cup M,f)$ is linearly representable, as
claimed.

%%%%%%%%%%%%%%%%%%%%%%%%%%%%%%%%%%%%%%%%%%%%%%%%%%%%%%%%%%%%%%%%%%%%%%%%%%%%%

\section{Definitions and basic tools}\label{sec:definitions}

This section introduces the basic tools which will be used in Section
\ref{sec:shannon-compl} to provide lower bounds on the Shannon complexity of 
some bipartite access structures.

Consider the rank function $f$ of a sss polymatroid
representing a bipartite access structure on $N_1\cup N_2$. All polymatroidal
constraints on the rank function are linear, thus one can incorporate 
all symmetries of the access structure into
the constraints (by taking the average over all automorphisms of the access
structure). In this way the rank function $f(J)$ depends only on the numbers
$|J\cap N_1|$ and $|J\cap N_2|$. This idea is detailed in
\cite{padro:optimization} where it is shown that all machinery can be
explained in terms of so-called {\em multipartite polymatroids}. 
In the bipartite case the rank function $f(i,j)$ is defined on $\NN\times\NN$,
the set of non-negative lattice points. 
Constraints resulting from the polymatroidal axioms are listed in 
(\ref{eq:shannon}) where $i$ and $j$ run over the non-negative integers.
These constraints can also be considered as definition: if a real function
$f$ defined on the non-negative lattice points satisfies all these constraints,
then it is a \emph{discrete submodular function}.
\begin{equation}\label{eq:shannon}
\begin{array}{l@{~~~~~}l}
f(i,j)\ge 0,~~ f(0,0)=0 
 & \mbox{non-negativity} \\[5pt]
\left.\begin{array}{@{}l}
f(i+1,j) \ge f(i,j) \\
f(i,j+1)\ge f(i,j)
\end{array}\right\}
 & \mbox{monotonicity} \\[14pt]
\left.\begin{array}{@{}l}
f(i,j)-f(i-1,j) \ge f(i+1,j)-f(i,j)\\
f(i,j)-f(i,j-1) \ge f(i,j+1)-f(i,j)
\end{array}\right\}
 & \mbox{submodularity - 1} \\[14pt]
f(i+1,j)-f(i,j) \ge f(i+1,j+1)-f(i,j+1)
 & \mbox{submodularity - 2}
\end{array}
\end{equation}
Next to these constraints additional
\emph{strong inequalities} express the additional requirement that the polymatroid 
should be a sss for the access structure $\Gamma$.
It turns out that this requirement is equivalent to require that the
difference between the left and right hand side in some inequalities in
(\ref{eq:shannon}) is at least one, depending on whether (any, or all) of the
subsets $J_{ij}$
identified by the arguments $i$ and $j$, that is, $|J_{ij}\cap N_1|=i$ and
$|J_{ij}\cap
N_2|=j$, is qualified or not. In (\ref{eq:shannon-strong}) below rather than using such a
verbal description, we use the notation $f^\bullet(i,j)$ to indicate
that the subset $J_{ij}$ corresponding to the argument $($i,$j)$ is qualified, and
$f^\circ(i,j)$ to indicate that $J_{ij}$ is unqualified.
\begin{equation}\label{eq:shannon-strong}
\begin{array}{l@{~~~~~}l}
\left.\begin{array}{@{}l}
f^\bullet(i+1,j) \ge f^\circ(i,j) + 1\\
f^\bullet(i,j+1)\ge f^\circ(i,j) + 1
\end{array}\right\}
 & \mbox{strong monotonicity} \\[14pt]
\left.\begin{array}{@{}l}
f^\bullet(i,j)-f^\circ(i-1,j) \ge f^\bullet(i+1,j)-f^\bullet(i,j) +1\\
f^\bullet(i,j)-f^\circ(i,j-1) \ge f^\bullet(i,j+1)-f^\bullet(i,j) +1
\end{array}\right\}
 & \mbox{strong submodularity - 1} \\[14pt]
f^\bullet(i+1,j)-f^\circ(i,j) \ge f^\bullet(i+1,j+1)-f^\bullet(i,j+1) +1
 & \mbox{strong submodularity - 2}
\end{array}
\end{equation}
Let $H$ and $V$ denote the first horizontal and vertical values at the
origin, respectively:
\begin{equation}\label{eq:HV}
 H = f(1,0), ~~~~~ V = f(0,1).
\end{equation}
With these notation the Shannon complexity of the bipartite access structure
$\Gamma$ is
\begin{equation}\label{eq:defkappa}
\kappa(\Gamma) = \inf\nolimits_f \, \{ \,\max(H,V)\,:\,
 \mbox{ $f$ satisfies (\ref{eq:shannon}) and (\ref{eq:shannon-strong}) } \}
.
\end{equation}
The aim of this Section and Section \ref{sec:shannon-compl} is to find, or give a good estimate for, this value.

\smallskip
Let us fix the bipartite access structure $\Gamma$ and a function $f$
which satisfies the constraints in (\ref{eq:shannon}) and
(\ref{eq:shannon-strong}). Arguments of $f$ are the lattice
points in the non-negative quadrant. These points are denoted by $A_1$, $B_2$,
etc., and with an abuse of notation, they also denote the value of
$f$ at that point. Qualified and unqualified arguments are denoted by solid
and hollow dots, respectively. Figure \ref{fig:example} illustrates 
three horizontally consecutive lattice points $A_1$, $A_2$, and $A_3$ such
that $A_1$ is unqualified, and $A_2$ and $A_3$ are qualified. 
Monotonicity constraints from (\ref{eq:shannon}) give
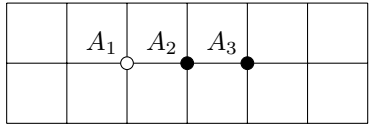
\begin{figure}[htb]
\begin{center}\begin{tikzpicture}[scale=0.8]
\draw(1,2) grid (7,4);
\draw (3,3) node[above left]{$A_1$} (4,3) node[above left]{$A_2$} (5,3)
node[above left]{$A_3$};
\draw[fill=white] (3,3) circle (3pt);
\foreach \x in {4,5} {\draw[fill] (\x,3) circle (3pt); }
\end{tikzpicture}\end{center}

\kern-14pt
\caption{Lattice points representing $f$}\label{fig:example}
\end{figure}%
$$
   A_1 \le A_2 \le A_3.
$$
while the first line of submodularity-1 in (\ref{eq:shannon}) translates to
$$
   2A_2  \ge A_1 + A_3.
$$
As $A_1$ is unqualified, and both $A_2$ and $A_3$ are qualified,
the stronger inequalities from (\ref{eq:shannon-strong}) also hold:
$$
	A_2 \ge A_1+1, ~~\mbox{and} ~~ 2A_2 \ge A_1+A_3+1 
$$
Submodularity-1 actually says that the function $f$, going from
left to right (first line), or going from bottom up (second line), is 
concave.
\begin{figure}[htb]
\begin{center}\begin{tikzpicture}[scale=0.8]
\draw(0,-0.1) node[below]{$A$} (2,-0.1) node[below]{$B$}
   (5,-0.1) node[below]{$C$} (7,-0.1) node[below]{$D$};
\draw(-0.4,0)--(0.7,0) (1.3,0)--(3.2,0) (3.8,0)--(5.7,0) (6.3,0)--(7.4,0);
\draw[dotted] (0.7,0)--(1.3,0) (3.2,0)--(3.8,0) (5.7,0)--(6.3,0);
\draw[fill=white](0,0)circle(3pt) (2,0) circle(3pt) (5,0) circle(3pt)
   (7,0)circle(3pt);
\draw(1,0.1) node[above]{$\overbrace{\hbox to 1.6cm{}}^{\textstyle k}$};
\draw(3.5,0.1) node[above]{$\overbrace{\hbox to 2.05cm{}}^{\textstyle \ell}$};
\draw(6,0.1) node[above]{$\overbrace{\hbox to 1.6cm{}}^{\textstyle m}$};
\end{tikzpicture}\end{center}
\vskip -18pt
\caption{Consequences of concavity}\label{fig:concave}
\end{figure}
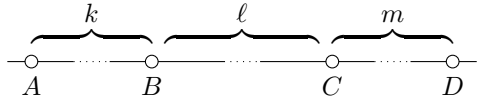
Lemma \ref{lemma:concave} is an easy consequence of this concavity and it refers
to Figure \ref{fig:concave}. The lattice points $A$, $B$, $C$ and $D$ are
on a horizontal (or vertical) line going from left to right (or from bottom up).
The distance between $A$ and $B$, $B$ and $C$,
$C$ and $D$ are $k$, $\ell$, $m$, respectively. In particular,
$k=\ell=m=1$ if $A$, $B$, $C$, $D$ are consecutive nodes.

\begin{lemma}\label{lemma:concave}
With the notation of Figure \ref{fig:concave},
\begin{itemize}
\item[\upshape a)] $\displaystyle
  \frac{B-A}k \ge \frac{C-B}\ell $, 
\item[\upshape b)] $\displaystyle
    \frac{B-A}{k} \ge \frac{D-C}{m}$,
\item[\upshape c)] if $A$ is unqualified, $B$ is qualified, and there are $s$
qualified nodes between $A$ and $B$ ({\em not} including $B$), then
$ \displaystyle \frac{B-A}k \ge \frac{C-B}\ell + \frac{k-s}k$;
\item[\upshape d)] if $A$ and $B$ are unqualified, $C$ and $D$ are qualified, then 
$\displaystyle
    \frac{B-A}{k} \ge \frac{D-C}{m} + 1
$. \hfill\qed
\end{itemize}
\end{lemma}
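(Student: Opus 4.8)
The plan is to reduce all four statements to a single fact about the sequence of consecutive differences of $f$ along the line through $A,B,C,D$: this sequence is non-increasing (concavity), and it drops by at least $1$ at the unique place where the nodes switch from unqualified to qualified. Parametrising the horizontal (or, by symmetry, vertical) line by its integer coordinate, I would write $g$ for the restriction of $f$, so that $g(0)=A$, $g(k)=B$, $g(k+\ell)=C$, $g(k+\ell+m)=D$, and set $d_i=g(i)-g(i-1)$. The first submodularity constraint in (\ref{eq:shannon}) reads $d_i\ge d_{i+1}$, so $(d_i)$ is non-increasing, which is exactly the recorded concavity. For any block of consecutive indices the average of the relevant $d_i$ lies between the last (smallest) and first (largest) term; hence $(B-A)/k\ge d_k$, while $(C-B)/\ell\le d_{k+1}$ and $(D-C)/m\le d_{k+\ell+1}$. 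Parts a) and b) then drop out at once from $d_k\ge d_{k+1}$ and $d_k\ge d_{k+\ell+1}$.

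For c) and d) I would first note that qualification is monotone along the line: since $\Gamma$ is closed under supersets, once a node is qualified every later node on the line is qualified too. Thus there is a single threshold $t$ with the nodes at positions $<t$ unqualified and those at positions $\ge t$ qualified. At this threshold $t-1$ is unqualified while $t,t+1$ are qualified, so the first strong submodularity constraint in (\ref{eq:shannon-strong}) supplies the crucial extra drop $e_t:=d_t-d_{t+1}\ge 1$, whereas concavity alone only gives $e_r\ge 0$ elsewhere. This single guaranteed unit is what upgrades a) and b) into c) and d).

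For d) the hypotheses place the threshold in the range $k<t\le k+\ell$, so $e_t$ is one of the terms in $d_k-d_{k+\ell+1}=\sum_{r=k}^{k+\ell}e_r$; discarding the remaining non-negative terms gives $d_k-d_{k+\ell+1}\ge e_t\ge 1$, and combining with $(B-A)/k\ge d_k$ and $(D-C)/m\le d_{k+\ell+1}$ closes the argument. For c) one checks $t=k-s$, so after bounding $(C-B)/\ell\le d_{k+1}$ the claim reduces to $(B-A)/k-d_{k+1}\ge (k-s)/k$, i.e. to $\sum_{i=1}^{k}(d_i-d_{k+1})\ge k-s$. Here I would use the summation-by-parts identity $\sum_{i=1}^{k}(d_i-d_{k+1})=\sum_{r=1}^{k} r\,e_r$ and keep only the term $r=t$, obtaining $\sum_{i=1}^{k}(d_i-d_{k+1})\ge t\,e_t\ge t=k-s$. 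The hard part will be precisely this bookkeeping in c): one must recognise that $k-s$ equals the number $t$ of unqualified nodes, locate the unique transition where the strong inequality applies, and see that weighting the single unit drop $e_t$ by its index $t$ accounts for the entire deficit $k-s$, rather than yielding just a lone $+1$.
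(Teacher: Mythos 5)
Your proof is correct and follows exactly the route the paper intends: the paper omits the argument, asserting that a)--d) are easy consequences of the concavity of the increment sequence (submodularity~-~1) together with the unit drop forced by strong submodularity at the unqualified-to-qualified transition, and your proposal supplies precisely these details (the averaging bounds $(B-A)/k\ge d_k$, $(C-B)/\ell\le d_{k+1}$, the telescoping for d), and the weighted sum $\sum_r r\,e_r\ge t\,e_t$ for c)). No gaps; in particular your identification $t=k-s$ of the threshold and the verification that the strong inequality applies there are exactly the bookkeeping the paper leaves to the reader.
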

\noindent
Claim b) is immediate from a); c) is a strong version of a); and d) is a
strong version of b). There are other strong versions of a) and b) depending on
how many qualified nodes are between certain pairs. As these versions can 
also be proved similarly, they will be used without any reference.

\begin{figure}[htb]
\begin{center}\begin{tikzpicture}[scale=0.8]
 \draw[very thin](3,-1)--(4,-1)--(4,-2)--(4.8,-2);
 \draw[very thin](6.2,-2)--(8,-2) (7,-1)--(7,-2);
 \draw[dotted] (4.8,-2)--(6.2,-2);
\draw[line width=1.5pt] 
   (4,-0.5)--(4,-1)--(4.8,-1) (6.2,-1)--(8,-1)--(8,-2);
\draw[line width=1.5pt,dotted](4.8,-1)--(6.2,-1);
\draw[fill=white] (3,-1) circle(3pt);
\foreach \x in {4,7,8}{\draw[fill](\x,-1) circle(3pt); }
\foreach \x in {4,7}{\draw[fill=white](\x,-2) circle(3pt); }
\draw[fill](8,-2)circle(3pt);
\draw(5.99,-0.9) node[above]{$\overbrace{\hbox to 2.93cm{}}^{\textstyle\ell}$};
\draw (2.95,-1) node[below]{$B_1$}
   (4.1,-1) node[below left]{$C_1$} (7.1,-1) node[below left]{$D_1$}
   (4.2,-2) node[below left]{$A_2$} (7.2,-2) node[below left]{$B_2$}
   (8.2,-2) node[below left]{$C_2$};
\end{tikzpicture}\end{center}

\vskip -18pt
\caption{Single increment}\label{fig:onestep}
\end{figure}
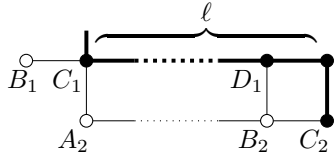
Lemma \ref{lemma:increment} refers to Figure \ref{fig:onestep}. Nodes $C_1$,
$D_1$, and $C_2$ are qualified, and nodes $B_1$, $A_2$, and $B_2$ are not.
The distance between $C_1$ and $D_1$ (between $A_2$ and $B_2$) is $\ell-1$;
$B_1C_1$, $C_1A_2$, etc., have length 1.
\begin{lemma}\label{lemma:increment}
With the notation of Figure \ref{fig:onestep} and assuming $\ell\ge 2$,
\begin{itemize}
\item[\upshape a)]$\displaystyle
  C_1-B_1 \ge (C_2-B_2)+1-\frac{V-1}{\ell-1}
$,
\item[\upshape b)] $\displaystyle
  H\ge C_1-B_1$, and $C_2-B_2\ge 1$.
\end{itemize}
\end{lemma}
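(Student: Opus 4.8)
The plan is to extract both parts directly from the concavity Lemma \ref{lemma:concave}, treating part (b) as a warm-up and reserving the real work for the extra term $\frac{V-1}{\ell-1}$ in part (a).

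Part (b) should be immediate. Since $C_2$ is qualified, $B_2$ is unqualified, and the two differ by a single horizontal step, strong monotonicity gives $C_2\ge B_2+1$, i.e. $C_2-B_2\ge 1$. For $H\ge C_1-B_1$, I would observe that $C_1-B_1$ is a unit horizontal increment while $H=f(1,0)$ is the unit horizontal increment at the origin; by submodularity - 1 such increments decrease as we move right, and by submodularity - 2 they decrease as we move up, so $H$ dominates every unit horizontal increment, $C_1-B_1$ among them.

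For part (a) the first move is to apply the strong form (c) of Lemma \ref{lemma:concave} along the top row with $A=B_1$ (unqualified), $B=C_1$ (qualified), $C=D_1$. Here $k=1$, the gap $C_1D_1$ has length $\ell-1$, and there is no qualified node strictly between $B_1$ and $C_1$, so $s=0$; this produces the inequality carrying the crucial $+1$, namely $C_1-B_1\ge \frac{D_1-C_1}{\ell-1}+1$. Along the bottom row I would use the plain part (a) with $A=A_2$, $B=B_2$, $C=C_2$ (so $k=\ell-1$, the gap $B_2C_2$ has length $1$), giving $C_2-B_2\le \frac{B_2-A_2}{\ell-1}$. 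Subtracting the second from the first yields
\[
  (C_1-B_1)-(C_2-B_2)\ \ge\ 1+\frac{(D_1-C_1)-(B_2-A_2)}{\ell-1}.
\]

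The last step, which I expect to be the genuine obstacle, is to rewrite the numerator as a difference of two \emph{vertical} increments and then estimate them asymmetrically. The telescoping identity $(D_1-C_1)-(B_2-A_2)=(D_1-B_2)-(C_1-A_2)$ shows the numerator equals $v_R-v_L$, where $v_L=C_1-A_2$ is the vertical increment on the left and $v_R=D_1-B_2$ the one on the right. I would bound these from opposite sides: $v_L\le V$, because $V=f(0,1)$ is the largest vertical increment (concavity in both directions, exactly as for $H$ in part (b)); and $v_R\ge 1$, because $D_1$ is qualified with the unqualified $B_2$ one step below it (strong monotonicity). Hence $v_L-v_R\le V-1$, so $\frac{v_R-v_L}{\ell-1}\ge -\frac{V-1}{\ell-1}$, and substituting back gives precisely $C_1-B_1\ge (C_2-B_2)+1-\frac{V-1}{\ell-1}$. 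The crux is recognizing that the \emph{left} vertical increment must be controlled by the \emph{global} maximum $V$ while the \emph{right} one is controlled by the \emph{local} strong lower bound $1$; it is this mismatch that manufactures the quantity $V-1$, and using the global bound on the wrong increment would give nothing.
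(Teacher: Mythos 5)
Your proof is correct and follows essentially the same route as the paper: Lemma \ref{lemma:concave}(c) along the top row, Lemma \ref{lemma:concave}(a) along the bottom row, then the two vertical estimates $D_1\ge B_2+1$ (strong monotonicity) and $C_1\le A_2+V$ (submodularity), which are exactly your bounds $v_R\ge 1$ and $v_L\le V$. The only difference is that you spell out the telescoping and the part (b) justifications, which the paper leaves implicit.
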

\begin{proof}
By claim c) of Lemma \ref{lemma:concave} we have
$$
    C_1-B_1\ge \frac{D_1-C_1}{\ell-1}+1,
$$
and by a) of the same Lemma,
$$
   \frac{B_2-A_2}{\ell-1}\ge C_2-B_2.
$$
To finish the proof one has to observe that $D_1\ge B_2+1$ by strong
monotonicity, and $C_1\le A_2+V$ by submodularity.
\end{proof}

\begin{corollary}\label{corr:simple}
Suppose $\Gamma$ has a step of width $\w=\w_k=i_{k+1}-i_k\ge 2$ such that
$i_k\not= 0$. Then $\kappa(\Gamma)\ge 2-1/\w$.
\end{corollary}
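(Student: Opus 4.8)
The plan is to bound $\max(H,V)$ from below for \emph{every} feasible $f$, since by (\ref{eq:defkappa}) the quantity $\kappa(\Gamma)$ is the infimum of $\max(H,V)$ over all $f$ satisfying the Shannon and strong constraints. The whole argument reduces to exhibiting inside the lattice exactly the configuration of Figure \ref{fig:onestep}, built from the given wide step, and then feeding it into Lemma \ref{lemma:increment}.

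First I locate the two adjacent rows. The threshold of row $j$ (the least $i$ for which $(i,j)$ is qualified) equals $i_m$ for the smallest index $m$ with $j_m\le j$. Hence row $j_k$ has threshold $i_k$, while row $j_k-1$ has threshold $i_{k+1}=i_k+\w$, because $j_{k+1}\le j_k-1<j_k$; this holds even when the height $\h_k$ exceeds $1$, so the entire width $\w$ of the step appears between these two consecutive rows. (Also $j_k\ge 1$, so the lower row exists.) I then place the six nodes: on the upper row $C_1=(i_k,j_k)$ and $B_1=(i_k-1,j_k)$, on the lower row $A_2=(i_k,j_k-1)$, $B_2=(i_k+\w-1,j_k-1)$ and $C_2=(i_k+\w,j_k-1)$, with $D_1=(i_k+\w-1,j_k)$ sitting above $B_2$. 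Here the hypothesis $i_k\ne 0$ is exactly what guarantees that $B_1$ is a genuine lattice point and unqualified; the threshold computation makes $C_1,D_1,C_2$ qualified and $A_2,B_2$ unqualified; and the horizontal gap from $C_1$ to $D_1$ is $\w-1$, so $\ell=\w\ge 2$ and Lemma \ref{lemma:increment} applies.

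Feeding this in, part (b) gives $H\ge C_1-B_1$ and $C_2-B_2\ge 1$, while part (a) gives $C_1-B_1\ge (C_2-B_2)+1-\frac{V-1}{\w-1}$. Chaining these,
$$
  H \;\ge\; C_1-B_1 \;\ge\; (C_2-B_2)+1-\frac{V-1}{\w-1}\;\ge\; 2-\frac{V-1}{\w-1}.
$$
It remains to convert this single relation between $H$ and $V$ into a bound on $\max(H,V)$. If $V\ge 2-\frac1\w$ we are done at once. Otherwise $V-1<\frac{\w-1}{\w}$, and dividing by $\w-1>0$ gives $\frac{V-1}{\w-1}<\frac1\w$, whence $H\ge 2-\frac{V-1}{\w-1}>2-\frac1\w$. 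In either case $\max(H,V)\ge 2-\frac1\w$, and taking the infimum over $f$ yields $\kappa(\Gamma)\ge 2-\frac1\w$.

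I expect the only genuine obstacle to be the bookkeeping in the second step: one must verify carefully that the abstract picture of Lemma \ref{lemma:increment} is realized by this step, i.e. that each of the six named nodes lies at the claimed coordinates with the claimed qualified or unqualified status, and in particular that $i_k\ne 0$ is used precisely to supply the leading unqualified node $B_1$. Once the configuration is checked, the remainder is the quoted lemma together with the short case analysis, both of which are routine.
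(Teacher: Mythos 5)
Your proof is correct and follows essentially the same route as the paper's: you realize the configuration of Figure \ref{fig:onestep} on the rows $j_k$ and $j_k-1$ of the wide step, apply Lemma \ref{lemma:increment} to get $H\ge 2-\frac{V-1}{\w-1}$, and convert this single inequality into $\max(H,V)\ge 2-1/\w$, exactly as the paper does (your case split is equivalent to the paper's rearrangement $(\w-1)H+V\ge 2\w-1$). If anything, your bookkeeping is more careful: the paper places $C_2$ at $(i_{k+1},j_k+1)$, evidently a typo for $(i_{k+1},j_k-1)$, which is the point you correctly use, and you make explicit the threshold computation and the roles of $i_k\ne 0$ and $j_k\ge 1$ that the paper leaves implicit.
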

\begin{proof}
\newcommand\ige[1]{\stackrel{\mathrm #1}\ge}
Denote the point $(i_k,j_k)$ by $C_1$, and the point $(i_{k+1},j_k+1)$ by
$C_2$. With this choice Lemma \ref{lemma:increment} gives
$$
  H\ige{(b)} C_1-B_1\ige{(a)} (C_2-B_2)+1-\frac{V-1}{\w-1}
    \ige{(b)} 2-\frac{V-1}{\w-1}.
$$
By rearranging $(\w-1)H+V\ge 2\w-1$, thus either $H$  or $V$ must be at
least $(2\w-1)/\w$, as was claimed.
\end{proof}

%%%%%%%%%%%%%%%%%%%%%%%%%%%%%%%%%%%%%%%%%%%%%%%%%%%%%%%%%%%%%%%%%%%%%%%%%%%

\section{Shannon complexity of some bipartite access structures}\label{sec:shannon-compl}

The bound given by Corollary \ref{corr:simple} is tight for some bipartite
access structures,
namely their Shannon complexity is $\kappa=2-1/\w$. To show that
this is the case, it is enough to present a particular submodular function
$f$ on the non-negative grid with $\max\{H,V\}\le\kappa$ which
satisfies all constraints in (\ref{eq:shannon}) and
(\ref{eq:shannon-strong}). Rather than giving the values of $f$ at the grid
points, it is more convenient to give values at horizontal and vertical
edges, which are the difference of the function values at the edge endpoints.
As $f(0,0)$ is zero, these differences determine $f$ uniquely. Properties
(\ref{eq:shannon}) and (\ref{eq:shannon-strong}) can be expressed in terms
of these differences in an equivalent form:
\begin{equation}\label{eq:shannon-diffs}
  \mbox{\begin{tabular}{p{0.6\textwidth}@{~~~~~~}l}
\hangindent 16pt (a) edge values are non-negative,\rule[-6pt]{0pt}{0pt} 
  & monotonicity \\ [4pt]
\hangindent 16pt (b) on each $1\times 1$ square, the sum of left and top 
edges equals the sum of bottom and right edges,\rule[-6pt]{0pt}{0pt} 
  & consistency \\
\hangindent 16pt (c) values are decreasing from left to right, and from 
bottom up (both for vertical and horizontal edges),\rule[-6pt]{0pt}{0pt}
   & submodularity \\
\hangindent 16pt (d) an edge between a qualified and an unqualified vertex
has value at least 1,\rule[-6pt]{0pt}{0pt} 
   & strong monotonicity \\
\hangindent 16pt (e) the increment between two adjacent horizontal
(vertical) edges is at least 
one if the second edge has two qualified endpoints, and the first edge has
only one,\rule[-6pt]{0pt}{0pt}
   & strong submodularity - 1 \\
\hangindent 16pt (f) in an $1\times1$ square with three qualified nodes the
left edge is at least 1 more than the right edge.\rule[-6pt]{0pt}{0pt}
  & strong submodularity - 2
 \end{tabular}}
\end{equation}
Figure \ref{fig:tight-2-example} shows the non-zero edge values for a
submodular function $f$ (the values are multiplies of 1/3). It
\begin{figure}[htb]
\begin{center}\begin{tikzpicture}[scale=0.8]
% the grid
\foreach \x in {1,...,8}{
  \draw[very thin](\x,0)--(\x,6.4);
}
\draw[->,very thin] (0,0)--(0,6.4);
\foreach \y in{1,...,6}{
  \draw[very thin](0,\y)--(8.4,\y);
}
\draw[->,very thin] (0,0)--(8.4,0);
\foreach\x in{0,...,8}{\draw(\x,-0.05) node[below]{\scriptsize\it\x};}
\foreach\y in{0,...,6}{\draw(-0.05,\y) node[left]{\scriptsize\it\y};}
\draw[line width=1.5pt](2,6.4)--(2,4)--(5,4)--(5,2)--(8.4,2);
\foreach \y in{4,5,6}{\draw[fill] (2,\y) circle(3pt);}
\foreach \x in{3,4,5}{\draw[fill] (\x,4) circle(3pt);}
\foreach \y in{2,3}{\draw[fill] (5,\y) circle(3pt);}
\foreach \x in{6,7,8}{\draw[fill](\x,2)circle(3pt);}
%====================================================
\begin{scope}[every node/.style={fill=white,font=\scriptsize,inner sep=1.5pt}]
\foreach \y in{0,1,2,3,4,5,6}{\draw (0.5,\y)node{5} (1.5,\y)node{5};}
\foreach\x in{0,1,2}{\foreach\y in{0.5,1.5,2.5,3.5}{
   \draw(\x,\y)node{5};}}
\foreach\x in{5,6,7,8}{
 \draw(\x,0.5)node{5} (\x,1.5)node{5} (\x,2.5)node{2} ;}
\foreach\x in{2.5,3.5,4.5}{\foreach\y in{0,1,2}{\draw(\x,\y)node{4};}
\draw(\x,3)node{3};}
\foreach\y in{4,5,6}{\draw(2.5,\y)node{2} (3.5,\y)node{2};}
\foreach\x in{3,4}{\draw(\x,0.5)node{5} (\x,1.5)node{5};}
\foreach\y in{2.5,3.5}{\draw(3,\y)node{4} (4,\y)node{3};}
\end{scope}
\end{tikzpicture}\end{center}

\kern -15pt
\caption{Submodular function by differences; values are multiplies of $1/3$}\label{fig:tight-2-example}
\end{figure}
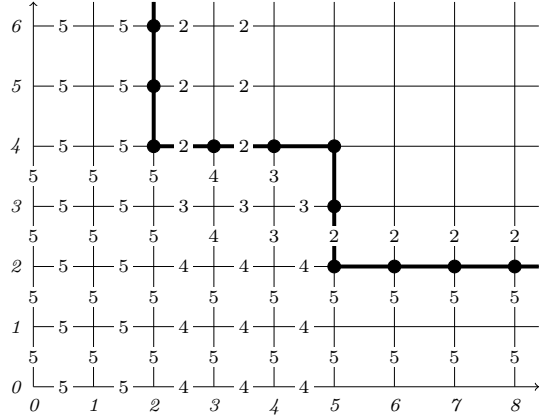%
realizes the bipartite access structure $\Gamma$ defined by the
points $(2,4)$ and $(5,2)$. Qualified and unqualified nodes are separated by
the solid line. The value of $f$ at any grid point is the sum of the differences
along any shortest ``Manhattan'' path from the point to the origin.
Conditions in (\ref{eq:shannon-diffs}) clearly hold.
For example, (\ref{eq:shannon-diffs}d) requires that values between adjacent
unqualified and qualified vertices should be at least one; such edges
are $(4,2)$--$(5,2)$, $(4,3)$--$(5,3)$, $(4,3)$--$(4,4)$, or
$(8,1)$--$(8,2)$. (\ref{eq:shannon-diffs}e) requires a difference of 1 or more for certain edge 
pairs such
as $(4,3)$--$(5,3)$--$(6,3)$, or $(1,y)$--$(2,y)$--$(3,y)$ for all $y\ge 4$.
There is only one square where (\ref{eq:shannon-diffs}f) applies, the one with diagonal points
$(4,4)$ and $(5,3)$. $\Gamma$ has a single step of width $\w=3$, thus Corollary
\ref{corr:simple} gives $\kappa(\Gamma)\ge 2-1/3$. As $H=V=5/3$ and $f$
realizes $\Gamma$, we also have $\kappa(\Gamma)\le 5/3$, thus
$\kappa(\Gamma)=5/3$. This construction generalizes for every single-step bipartite
access structure.

\begin{theorem}\label{thm:result1}
Suppose $\Gamma$ is defined by two points $(i_1,j_1)$ and $(i_2,j_2)$ where
$0<i_1$; $\w=i_2-i_1\ge \h=j_1-j_2$. If $\w\ge 2$ then
$\kappa(\Gamma)=2-1/\w$.
\end{theorem}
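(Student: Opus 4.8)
The lower bound needs no new work: $\Gamma$ consists of a single step, of width $\w=i_2-i_1\ge2$ with $i_1\ne0$, so Corollary \ref{corr:simple} applies directly and gives $\kappa(\Gamma)\ge2-1/\w$. By the left--right/up--down symmetry of the setup the same corollary applied to the height yields only $2-1/\h$, and the hypothesis $\w\ge\h$ makes this the weaker bound; hence $2-1/\w$ is the correct target and it remains to match it from above.

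For the upper bound I would exhibit a submodular $f$ satisfying (\ref{eq:shannon-diffs}) with $\max(H,V)=2-1/\w$, generalizing Figure \ref{fig:tight-2-example}. As there, I describe $f$ through its edge values, scaled to integer multiples of $1/\w$. Away from the ``step box'' $R=[i_1,i_2]\times[j_2,j_1]$ every edge gets the full value $2\w-1$, so that $H=V=2\w-1$ in these units, i.e. $2-1/\w$. Inside $R$ the horizontal edge values start at $\w+\h-1$ along the bottom row and decrease by $1$ for each unit step upward, while the vertical edge values start at $2\w-1$ along the left column and decrease by $1$ for each unit step to the right; to the right of $x=i_2$ all horizontal edges and above $y=j_1$ all vertical edges are $0$. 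Finally $f$ is truncated at its top value $f_{\max}$, i.e. replaced by $\min(f,f_{\max})$. The hypothesis $\w\ge\h$ enters already here: it is exactly the inequality $\w+\h-1\le2\w-1$, which guarantees that the reduced bottom-row values do not exceed the full value, so that the edge values stay monotone.

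With $f$ so defined, conditions (\ref{eq:shannon-diffs}a)--(\ref{eq:shannon-diffs}c) are routine: the prescribed increments are consistent on every unit square (inside $R$ both mixed increments equal $-1$), so they integrate to a genuine monotone function with non-increasing edge values, and the truncation lowers only the edges incident to the corner $(i_2,j_1)$. The substance is the three strong conditions along the staircase. I would verify that (\ref{eq:shannon-diffs}d) holds, being tight precisely at the two boundary-crossing edges nearest the corner $(i_2,j_1)$, whose values are $(\w+\h-1)-(\h-1)=\w$ and $(2\w-1)-(\w-1)=\w$; that (\ref{eq:shannon-diffs}e) holds, being tight where the top row of $R$ begins (the first top-row horizontal edge has value $\w-1$, reached by a drop of $\w$) and at the entry to the right boundary; and that (\ref{eq:shannon-diffs}f), which applies only to the single square at the corner, reads $\w\ge0+\w$.

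I expect the main obstacle to be exactly this corner. The value $0$ in the instance of (\ref{eq:shannon-diffs}f) is produced by the truncation: it is the nominal value $\w-1$ of the vertical edge $(i_2,j_1-1)$--$(i_2,j_1)$ collapsed to $0$ because both endpoints already attain $f_{\max}$. I would therefore first prove that, inside $R$, $f$ equals $f_{\max}$ exactly on the set where $(x-i_1)+(y-j_2)\ge\w+\h-1$, which consists only of the three lattice points nearest the corner; this both pins down $f_{\max}$ and reduces the verification of (\ref{eq:shannon-diffs}d)--(\ref{eq:shannon-diffs}f) in the truncated region, together with monotonicity and submodularity there, to a short finite check around $(i_2,j_1)$. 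Once this is in place, $f$ realizes $\Gamma$ with $\max(H,V)=2-1/\w$, and combined with the lower bound this gives $\kappa(\Gamma)=2-1/\w$.
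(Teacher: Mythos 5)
Your lower bound is exactly the paper's: Corollary \ref{corr:simple} applied to the single step of width $\w$. The problem is the upper bound: the edge assignment you describe violates the consistency condition (\ref{eq:shannon-diffs}b), so it does not define any function $f$ at all (truncation cannot rescue it, since $f$ must exist before it can be truncated). Concretely, suppose $j_2\ge 1$ and take any unit square directly below the step box $R$, with corners $(x,j_2-1)$ and $(x+1,j_2)$, $i_1\le x<i_2$. By your prescription its bottom edge and both of its vertical edges lie ``away from $R$'' and carry the full value $2\w-1$, while its top edge is the bottom row of $R$ and carries $\w+\h-1$; consistency requires left${}+{}$top${}={}$bottom${}+{}$right, i.e.\ $\w+\h-1=2\w-1$, which fails whenever $\h<\w$. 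Equivalently, summing your edge values from the origin to $(i_2,j_2)$ along the two monotone paths (right-then-up versus up-then-right) gives totals differing by $\w(\w-\h)$. Your parenthetical check that ``inside $R$ both mixed increments equal $-1$'' verifies consistency only on squares interior to $R$; the failure is on the squares where $R$ meets the full-value region. The same happens in the strip above $y=j_1$ and in the band $j_2\le y<j_1$ to the right of $x=i_2$, where under your literal prescription the values even jump back up to $2\w-1$, violating (\ref{eq:shannon-diffs}c) as well. Your use of $\w\ge\h$ (``the reduced bottom-row values do not exceed the full value'') addresses monotonicity (c), but consistency (b) is a different constraint and is the one that breaks.

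The paper's construction (Figure \ref{fig:one-step}) agrees with yours inside the box but differs precisely where yours fails: the reduced values are propagated into the outer regions instead of reverting to $2\w-1$. There, horizontal edges carry $\w+\h-1$ on the whole strip $i_1\le x<i_2$, $y\le j_2$, all the way down to the $x$-axis (not only on the bottom row of $R$); vertical edges carry $2\w-1$ on the whole half-plane $y<j_2$ across all columns; above $y=j_1$ the horizontal value is $\w-1$ (not $2\w-1$), and to the right of $x=i_2$ the vertical value is $\w-1$, decaying to $0$ further out. With this layout every unit square is consistent, no truncation is needed, and the three strong conditions can be checked edge by edge. The hypothesis $\w\ge\h$ then enters as monotonicity in the $x$ direction, $2\w-1\ge\w+\h-1$, for horizontal edges below the box --- close in spirit to, but not the same as, the place where you invoked it. So your overall strategy (explicit edge values at multiples of $1/\w$ matching the Corollary \ref{corr:simple} bound) is the paper's, but the specific assignment outside $R$ must be reworked along these lines before the verification you outline can even begin.
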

\begin{proof}
By Corollary \ref{corr:simple}, the Shannon complexity of $\Gamma$
is at least $2-1/\w$. 
The submodular function defined by the non-zero edge values on Figure 
\ref{fig:one-step}
\begin{figure}[htb]%
\def\+#1{${+}\!#1$}%
\begin{center}\begin{tikzpicture}[scale=0.8]
% the grid 
\foreach\x in {0,1,2,3,4,6,7,8,9,10}{
 \draw[very thin] (\x,-0.25)--(\x,7.25);}
\foreach\y in{0,1,2,4,5,6,7}{
 \draw[very thin](-0.25,\y)--(10.25,\y);}
\draw[dotted](-0.25,3)--(10.25,3) (5,-0.25)--(5,7.25);
\draw (4.7,7.3) node[above]{$\overbrace{\hbox to 4.6cm{}}^{\textstyle \w}$};
\draw(-0.3,4.2) node[left]{$\h\left\{\rule[-1.4cm]{0pt}{3.0cm}\right.$};
\draw[thick] (2,7.25)--(2,6) -- (8,6)--(8,2)--(10.25,2);
\foreach\x in{2,3,4,6,7,8}{\draw[fill](\x,6)circle(3pt); }
\draw[fill](2,7)circle(3pt) (9,2)circle(3pt) (10,2)circle(3pt);
\foreach\y in {2,4,5}{\draw[fill](8,\y) circle(3pt);}
\begin{scope}[every node/.style={fill=white,font=\scriptsize,inner sep=1.5pt}]
\foreach\x in{0.5,1.5}{\foreach\y in {0,1,2,4,5,6,7}{
  \draw(\x,\y)node{\+\w};
}}
\foreach\x in{2.5,3.5,4.5,5.5,6.5,7.5}{
  \draw(\x,5)node{\+1};
  \draw(\x,4)node{\+2};
  \draw(\x,2)node{\+\h};
  \draw(\x,1)node{\+\h};
  \draw(\x,0)node{\+\h};
}
\foreach\y in {6,7}{\foreach\x in{2.5,3.5,4.5,5.5,6.5}{
  \draw(\x,\y)node{\+0};}
}
\foreach\x in{8,9,10}{\foreach\y in{2.5,3.5,4.5}{
  \draw(\x,\y)node{\+0};
}
\foreach\y in{2.5,3.5,4.5,5.5}{
   \draw (7,\y)node{\+1}; \draw(3,\y)node{\+{\w\hbox{-}1}};
   \draw(6,\y)node{\+2};  \draw(4,\y)node{\+{\w\hbox{-}2}};
}}
\foreach\x in{0,1,2}{\foreach\y in{0,1,2,3,4,5}{
  \draw(\x,\y+0.5)node{\+\w};
}}
\foreach\x in{3,4,6,7,8,9,10}{
  \draw(\x,0.5)node{\+\w} (\x,1.5)node{\+\w};
}
\end{scope}
\draw[dotted](-0.25,3)--(10.25,3) (5,-0.25)--(5,7.25);
\end{tikzpicture}\end{center}

\kern -12pt
\caption{Single step access structure, values are multiplies of $1/\w$}\label{fig:one-step}
\end{figure}
has complexity $(2\w-1)/\w$, thus it gives the required upper bound. Edge
numbers are multiplies of $1/\w$ and numbers preceded by a $+$ sign should be
increased by $\w-1$, e.g., $+\w$ means $\w+\w-1=2\w-1$ (edge values at
the bottom left corner). Similarly, $+0=\w-1$, $+\h=\h+\w-1$, etc. The
bottom row and the leftmost column can be repeated until the bottom left
vertex becomes the origin. It is a routine to check that all conditions in
(\ref{eq:shannon-diffs}) actually hold.
\end{proof}

\begin{theorem}\label{thm:result2}
Let $\Gamma$ be a regular staircase with the same width and height
$\w=\h\ge 2$ such that for some $1\le k \le \ell_\Gamma$ the point
$(i_k,j_k)$ has positive coordinates. Then $\kappa(\Gamma)=2-1/\w$. 
\end{theorem}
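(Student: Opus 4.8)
The plan is to prove the matching lower and upper bounds separately. The lower bound is a short deduction from the tools already in place, while the upper bound needs an explicit submodular function.

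For the lower bound I would use Corollary \ref{corr:simple} together with its transpose. Interchanging the two parts $N_1\leftrightarrow N_2$ reflects the staircase across the diagonal, turning widths into heights; since $\kappa$ is invariant under this isomorphism, Corollary \ref{corr:simple} also yields $\kappa(\Gamma)\ge 2-1/\h_k$ whenever $\Gamma$ has a step of height $\h_k\ge 2$ whose lower endpoint has positive ordinate $j_{k+1}\ne 0$. Now let $(i_k,j_k)$ be the point guaranteed to have positive coordinates. A regular staircase with $\w=\h$ defined needs at least two points, so $\ell_\Gamma\ge 2$. If $k<\ell_\Gamma$, then step $k$ has width $\w\ge 2$ and starting abscissa $i_k>0$, and Corollary \ref{corr:simple} applies directly. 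If $k=\ell_\Gamma$, then step $k-1$ (which exists since $k\ge 2$) has height $\h=\w\ge 2$ and lower ordinate $j_k>0$, so the transposed version applies. In either case $\kappa(\Gamma)\ge 2-1/\w$.

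For the upper bound it suffices, exactly as in Theorem \ref{thm:result1}, to exhibit a single function obeying (\ref{eq:shannon-diffs}) with $\max(H,V)=(2\w-1)/\w$. I would generalize the single-step assignment of Figure \ref{fig:one-step} by tiling it along the diagonal. Because $\w=\h$, consecutive steps are translates of one another by the vector $(\w,-\w)$, and this is precisely the symmetry needed so that the edge-difference pattern of one step can be copied verbatim onto the next: the function would be periodic up to a linear drift under $(x,y)\mapsto(x+\w,y-\w)$, making the edge values exactly periodic. Concretely I would give every unit edge a value (a multiple of $1/\w$) depending only on its position relative to the nearest step, equal to $(2\w-1)/\w$ in the lower-left bulk, dropping to $1$ across the staircase boundary, with the transitional multiples of $1/\w$ taken straight from Figure \ref{fig:one-step} in the diagonal band hugging each step. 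The topmost and bottom-right steps are truncated half-steps, handled exactly as the top rows and right columns of Figure \ref{fig:one-step}. Verification then reduces to checking conditions (a)--(f) of (\ref{eq:shannon-diffs}): monotonicity, consistency, and submodularity hold inside each band by the single-step computation, and the strong conditions (d)--(f) near the inner corners $(i_k,j_k)$ and outer corners $(i_{k+1},j_k)$ are where the hypothesis $\w=\h$ is genuinely used, since adjacent steps align only under the $45^\circ$ descent.

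The main obstacle I anticipate is the gluing. Within a single band all six conditions are inherited from the single-step case, so the real work is to show that the per-step patterns agree on the edges shared by neighboring diagonal bands and, above all, that submodularity (the monotone decrease of edge values left-to-right and bottom-up) survives across those interfaces and not merely inside one step. A secondary nuisance is arranging the truncation of the two extreme steps so that the global function stays monotone and consistent all the way out to the axes. Once both the interface consistency and the boundary truncation are settled, the function realizes $\Gamma$ with $H=V=(2\w-1)/\w$, giving $\kappa(\Gamma)\le 2-1/\w$ and hence equality.
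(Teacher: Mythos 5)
Your lower-bound half is correct, and it is actually more careful than the paper's own one-line justification. Corollary \ref{corr:simple}, read literally, needs a step of width $\ge 2$ whose \emph{left} endpoint has $i_k\neq 0$; when the only point with positive coordinates is the last one, $k=\ell_\Gamma$, and $i_{\ell_\Gamma-1}=0$ (e.g.\ the staircase $(0,3),(2,1)$ with $\w=\h=2$), the corollary applies to no step, and the transposed version you formulate (legitimate because $\kappa$ is invariant under swapping $N_1$ and $N_2$) is genuinely needed. The paper silently elides this case.

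The gap is in the upper bound. Your overall plan --- an edge-value assignment, exactly periodic under $(x,y)\mapsto(x+\w,y-\w)$, with $H=V=(2\w-1)/\w$ --- is precisely what the paper does (its Figure \ref{fig:regular-staircase}), but the values cannot be ``taken straight from Figure \ref{fig:one-step}'': the single-step pattern is not periodic under that shift, and the two per-step copies \emph{contradict} each other on edges that hug two consecutive steps, so the gluing you flagged as the main obstacle is not a verification task but provably impossible for these values. Concretely, consider the vertical edge at column $i_{k+1}+1$ joining $(i_{k+1}+1,j_{k+1})$ to $(i_{k+1}+1,j_{k+1}+1)$. In the copy of Figure \ref{fig:one-step} centred at step $k$ this is the edge one column right of the step's corner, in the lowest band of the qualified wedge, and it carries the value $(\w-1)/\w$; in the copy centred at step $k+1$ it is the edge one column right of the staircase point, just above the staircase row, and it carries the value $0$. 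More globally, Figure \ref{fig:one-step} \emph{plateaus} on the qualified side (vertical edges right of the corner stay at $(\w-1)/\w$, horizontal edges above the step stay at $(\w-1)/\w$), whereas periodicity forces these values to keep decreasing toward the next step; keeping the plateau from step $k$ and the zero values demanded by step $k+1$ makes edge values \emph{increase} as one moves up (or right) across the interface, violating submodularity (\ref{eq:shannon-diffs}c) wherever the interface is drawn. The paper's pattern differs from Figure \ref{fig:one-step} exactly here: its edge values never plateau but decrease by $1/\w$ per unit step, in both coordinates, all the way across the period, reaching $0$ one unit before the next step and being capped at $(2\w-1)/\w$ on the unqualified side; this linear-with-floor assignment is simultaneously periodic, consistent and submodular, and it is where $\w=\h$ enters (the horizontal and vertical decrement rates must coincide). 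So your proof as written fails; repairing it requires replacing the Figure \ref{fig:one-step} values by this different pattern, i.e.\ essentially rediscovering the paper's Figure \ref{fig:regular-staircase}.
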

\begin{proof}
The additional condition that $(i_k,j_k)$ is not on any of the coordinate 
axes guarantees that Corollary \ref{corr:simple} can be applied, and
gives $\kappa(\Gamma)\ge 2-1/\w$. For the other direction Figure
\ref{fig:regular-staircase} shows part of the non-zero edge values of a
\begin{figure}[htb]
\begin{center}\begin{tikzpicture}[scale=0.8]
% the grid 
\foreach\x in {0,...,10}{
 \draw[very thin] (\x,-0.9)--(\x,5.9);}
\foreach\y in{0,...,5}{
 \draw[very thin](-0.9,\y)--(10.9,\y);}
% --------------------------------------------------
\draw (2.5,5.99) node[above]{$\overbrace{\hbox to 3.8cm{}}^{\textstyle \w}$};
\draw (7.5,5.99) node[above]{$\overbrace{\hbox to 3.8cm{}}^{\textstyle \w}$};
\draw(-0.99,2.5) node[left]{$\w\left\{\rule[-1.8cm]{0pt}{3.8cm}\right.$};
\draw[line width=1.5pt] (0,5.9)--(0,5) -- (5,5)--(5,0)--(10,0)--(10,-0.9);
\foreach\x in{0,...,5}{\draw[fill](\x,5)circle(3pt) (5+\x,0)circle(3pt);}
\foreach\y in{1,2,3,4}{\draw[fill](5,\y)circle(3pt); }
% --------------------------------------------------
\begin{scope}[every node/.style={fill=white,font=\scriptsize,inner sep=1.8pt}]
\foreach\x in{1,...,9}{
  \foreach\y in{0.5,1.5,2.5,3.5}{
  \draw(9-\x,\y)node{\x};}}
\foreach\x in{5,6,7,8,9}{
  \draw(9-\x,4.5)node{\x} (14-\x,-0.5)node{\x};}
\foreach\x in{1,2,3,4}{\draw(4-\x,5.5)node{\x};}
\foreach\x in{0.5,...,3.5}{\draw(\x,5)node{4};}
\foreach\y in{5,6,7,8,9}{\foreach\x in{0.5,...,4.5}{
  \draw(\x,9-\y)node{\y};}}
\foreach\y in{1,2,3,4}{\foreach\x in{5.5,...,8.5}{
  \draw(\x,4-\y)node{\y};}}
\foreach\y in{0,...,5}{\draw(-0.5,\y)node{9};}
\foreach\x in{0,...,4}{\draw(\x,-0.5)node{9};}
\end{scope}
\end{tikzpicture}\end{center}

\kern-14pt
\caption{Regular staircase with same width and height $\w$}\label{fig:regular-staircase}
\end{figure}
submodular function for the regular staircase with $\h=\w=5$. Values are
multiplies of $1/\w$. The given pattern should be repeated by shifting it
down and right (up and left) by $\w$ until it fills the non-negative
quadrant. Conditions in (\ref{eq:shannon-diffs}) clearly hold. The pattern
easily generalizes for every regular staircase with equal width and height.
\end{proof}

\begin{theorem}\label{thm:matus}
Suppose all heights of the staircase $\Gamma$ are $1$, the first point
$(i_1,j_1)$ is not on the $y$-axis, and all widths are $\w_k\ge 2$. Then
\begin{equation}\label{eq:thm:matus}
  \kappa(\Gamma)\ge \kappa_0= 1+\frac{\ell_\Gamma-1}{
    \displaystyle 1+\sum_k \frac1{\w_k-1}}
\end{equation}
\end{theorem}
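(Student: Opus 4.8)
The plan is to run the argument of Corollary~\ref{corr:simple} simultaneously on all $\ell_\Gamma-1$ steps and then telescope the resulting increment inequalities. Fix any feasible $f$, i.e.\ one satisfying (\ref{eq:shannon}) and (\ref{eq:shannon-strong}), and recall $H=f(1,0)$, $V=f(0,1)$. Since all heights are $1$, the boundary point in row $y=j_k$ is $(i_k,j_k)$ and it is the left-most qualified node of that row (a point $(x,j_k)$ is qualified iff $x\ge i_{k'}$ for some $k'\ge k$, and the smallest such $i_{k'}$ is $i_k$). I would set $e_k=f(i_k,j_k)-f(i_k-1,j_k)$, the value of the edge just to its left. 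The hypothesis $i_1>0$ together with $i_1<\cdots<i_{\ell_\Gamma}$ forces $i_k\ge 1$, so each $e_k$ is the value of an edge joining the unqualified node $(i_k-1,j_k)$ to the qualified node $(i_k,j_k)$; hence $e_k\ge 1$ by strong monotonicity, and $H\ge e_1$ by submodularity (horizontal edge values are maximal at the origin).

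Next I would apply Lemma~\ref{lemma:increment} horizontally to each step $k$, taking its upper row to be $y=j_k$ and its lower row $y=j_{k+1}=j_k-1$, with $C_1=(i_k,j_k)$, $D_1=(i_{k+1}-1,j_k)$ and $C_2=(i_{k+1},j_{k+1})$; the nodes $B_1=(i_k-1,j_k)$, $A_2=(i_k,j_{k+1})$, $B_2=(i_{k+1}-1,j_{k+1})$ are precisely the unqualified ones prescribed by Figure~\ref{fig:onestep}, and the left-most qualified node jumps by $\w_k$ from one row to the next, so the lemma applies with its parameter equal to $\w_k$ (which is $\ge 2$ by assumption). Since $C_1-B_1=e_k$ and $C_2-B_2=e_{k+1}$, part~(a) of the lemma reads
$$ e_k\ge e_{k+1}+1-\frac{V-1}{\w_k-1},\qquad k=1,\dots,\ell_\Gamma-1. $$

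Summing these $\ell_\Gamma-1$ inequalities telescopes the left-hand side to $e_1-e_{\ell_\Gamma}$, and inserting $e_{\ell_\Gamma}\ge 1$ and $H\ge e_1$ gives $H+(V-1)S\ge \ell_\Gamma$, where $S=\sum_k 1/(\w_k-1)$. To finish, I would bound $H\le\max(H,V)$ and, since $S\ge 0$ and $V\le\max(H,V)$, also $(V-1)S\le(\max(H,V)-1)S$; substituting and rearranging yields $\max(H,V)\,(1+S)\ge \ell_\Gamma+S$, that is $\max(H,V)\ge 1+(\ell_\Gamma-1)/(1+S)=\kappa_0$. As this holds for every feasible $f$, taking the infimum in (\ref{eq:defkappa}) gives $\kappa(\Gamma)\ge\kappa_0$.

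The bookkeeping is routine; the one place to be careful is the geometric matching in the second step, namely checking that at each step the six nodes named in Figure~\ref{fig:onestep} really carry the prescribed qualified/unqualified labels. This is exactly where the two structural hypotheses are consumed: $i_1>0$ makes the left-most edge $e_1$ (and hence every $e_k$) exist, while $\w_k\ge 2$ is the condition $\ell\ge 2$ needed to invoke Lemma~\ref{lemma:increment}. A secondary point worth stating cleanly is that the final algebraic step is valid irrespective of the sign of $V-1$, since it uses only $V\le\max(H,V)$ and $S\ge 0$.
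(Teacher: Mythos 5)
Your proof is correct and follows essentially the same route as the paper: the paper likewise applies Lemma~\ref{lemma:increment} to each step (its $C_k-B_k$ is your $e_k$), telescopes the resulting inequalities together with $H\ge C_1-B_1$ and $C_\ell-B_\ell\ge 1$, and rearranges $H+(V-1)\sum_k 1/(\w_k-1)\ge\ell_\Gamma$ to conclude $\max(H,V)\ge\kappa_0$. Your added care about the qualified/unqualified labels in Figure~\ref{fig:onestep} and about the sign of $V-1$ only makes explicit what the paper leaves implicit.
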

\begin{proof}
Let $\ell=\ell_\Gamma$ and denote the points $(i_1,j_1)$, \dots, $(i_\ell,j_\ell)$ by
$C_1$, \dots, $C_\ell$, see Figure \ref{fig:matus}.
\begin{figure}[htb]
\begin{center}\begin{tikzpicture}[scale=0.8]
\draw[line width=1.5](1,0.5)--(1,0)--(4,0)--(4,-1)--(7,-1)--(7,-2)--(7.7,-2);
\draw[very thin](0,0)--(1,0)--(1,-1)--(4,-1)--(4,-2)--(7,-2)--(7,-2.4);
\draw[fill=white] (0,0) circle(3pt) (3,-1) circle(3pt)
       (6,-2) circle(3pt);
\draw[fill](1,0) circle(3pt) (4,-1) circle(3pt) (7,-2) circle(3pt);
\draw(2.5,0.15) node[above] {$\overbrace{\hbox to 2.2cm{}}^{\textstyle
\w_1}$};
\draw(5.5,-0.85) node[above] {$\overbrace{\hbox to
2.2cm{}}^{\textstyle\w_2}$};
\draw (0,0) node[below] {$B_1$} (1,0) node[below right]{$C_1$}
    (3,-1) node[below]{$B_2$} (4,-1) node[below right]{$C_2$} 
    (6,-2) node[below]{$B_3$} (7,-2) node[below right]{$C_3$};
\end{tikzpicture}\end{center}

\kern-14pt
\caption{Staircase with heights 1}\label{fig:matus}
\end{figure}
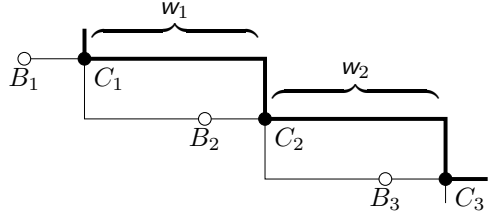
By Lemma \ref{lemma:increment} we have
\begin{equation*}
    H \ge C_1-B_1 \ge (C_2-B_2) + 1 - \frac {V-1}{\w_1-1},
\end{equation*} 
for each $1\le k\le \ell-1$
\begin{equation*}
    C_k-B_k \ge (C_{k+1}-B_{k+1})+1- \frac{V-1}{\w_k-1},
\end{equation*}
and finally
\begin{equation*}
    C_\ell-B_\ell \ge 1.
\end{equation*}
Adding them up we get
\begin{equation}\label{eq:sum-all}
    H \ge \ell-\sum_{k=1}^{\ell-1} \frac{V-1}{\w_k-1},
\end{equation}
or
$$
    H + V\sum \frac1{\w_k-1} \ge \ell+\sum \frac1{\w_k-1}.
$$
Consequently either $H$ or $V$ must be at least $\kappa_0$.
\end{proof}

Under an additional technical assumption the lower bound $\kappa_0$ in
Theorem \ref{thm:matus} is tight. The proof is by exhibiting an appropriate
discrete submodular function.

\begin{theorem}\label{thm:matus-sharp}
With the assumptions of Theorem \ref{thm:matus}, if, additionally,
$\w_k\ge\kappa_0$ for all widths, then $\kappa(\Gamma)=\kappa_0$.
\end{theorem}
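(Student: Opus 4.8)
The plan is to use Theorem~\ref{thm:matus} for the lower bound $\kappa(\Gamma)\ge\kappa_0$ and to supply the matching upper bound by exhibiting a single submodular function $f$ that realizes $\Gamma$ with $\max(H,V)=\kappa_0$. As in Theorems~\ref{thm:result1} and~\ref{thm:result2}, I would not list the values of $f$ at the grid points but rather its edge values (the differences across unit edges) and verify the equivalent conditions (\ref{eq:shannon-diffs}a)--(f). The guiding principle is to force every inequality in the chain leading to (\ref{eq:sum-all}) to hold with equality, since the lower bound is attained exactly when all applications of Lemma~\ref{lemma:increment} are tight.

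Concretely, I would normalise the secret size to $1$ and put $V=\kappa_0$. Writing $d_k$ for the value of the horizontal edge $B_kC_k$ entering the $k$-th inner corner $C_k=(i_k,j_k)$ in the notation of Figure~\ref{fig:matus}, equality in Lemma~\ref{lemma:increment} prescribes $d_\ell=1$ together with the recursion $d_k=d_{k+1}+1-\frac{V-1}{\w_k-1}$ for $1\le k\le\ell-1$. This telescopes, and using $V-1=\frac{\ell-1}{1+s}$ with $s=\sum_k\frac1{\w_k-1}$ one obtains $d_1=\ell-(V-1)s=\frac{\ell+s}{1+s}=\kappa_0$, so $H=d_1=\kappa_0$ as well: fixing $V=\kappa_0$ forces $H$ to the same value. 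I would then set the edge values equal to the constant $\kappa_0$ throughout the unqualified region to the lower left of the staircase, let the horizontal edge values decrease across each step by exactly the amount that keeps Lemma~\ref{lemma:increment} tight, and fill in the vertical edges through the consistency rule (\ref{eq:shannon-diffs}b), extending the pattern toward the axes as in Figures~\ref{fig:one-step} and~\ref{fig:regular-staircase}.

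The hypothesis $\w_k\ge\kappa_0$ enters here decisively. It is equivalent to $\frac{V-1}{\w_k-1}\le 1$, which is exactly what guarantees $d_k\ge d_{k+1}$; hence the sequence $d_1\ge d_2\ge\cdots\ge d_\ell=1$ is non-increasing and stays $\ge 1>0$. Without this assumption some $d_k$ would fall below its successor, or below $1$, violating non-negativity (\ref{eq:shannon-diffs}a) and the submodular ordering (\ref{eq:shannon-diffs}c), and also wrecking the strong monotonicity (\ref{eq:shannon-diffs}d) on the crossing edges $B_kC_k$.

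The remaining, and genuinely technical, step is to verify the submodularity and strong-submodularity conditions (\ref{eq:shannon-diffs}c),(e),(f) uniformly over the quadrant, most delicately at the re-entrant corners where the horizontal run at height $j_k$ meets the unit drop to height $j_{k+1}$. There one must check that the decrease of the horizontal edge values, as one moves right and up, is compatible with the decrease of the vertical edge values forced by consistency; after the routine bookkeeping this compatibility again collapses to the single inequality $\w_k\ge\kappa_0$ at each step. Since the recursion was chosen so that (\ref{eq:shannon-diffs}d),(e),(f) hold with equality precisely along the staircase, $f$ realizes $\Gamma$ exactly and not a coarser structure, yielding $\kappa(\Gamma)\le\kappa_0$ and hence equality. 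I expect this uniform verification at the corners to be the main obstacle, with $\w_k\ge\kappa_0$ being the precise condition that makes it go through.
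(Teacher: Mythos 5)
Your proposal is correct and takes essentially the same route as the paper: the lower bound is quoted from Theorem \ref{thm:matus}, and the upper bound comes from an edge-value construction in which the horizontal edge values obey the tight recursion $d_k=d_{k+1}+1-\frac{V-1}{\w_k-1}$ with $d_\ell=1$ and $V=\kappa_0$, telescoping to $H=\kappa_0$ (the paper writes the per-step increment as $\frac{\w_k-V}{\w_k-1}$, the same quantity). The hypothesis $\w_k\ge\kappa_0$ plays the identical role in both arguments — it keeps the horizontal edge values non-increasing so that submodularity holds — and the paper's Figure \ref{fig:construction} merely fixes the bookkeeping you defer: vertical edges in each transition column form an arithmetic progression from $V$ down to $1$, and horizontal edges in the unqualified region equal the corresponding qualified-region value plus one.
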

\begin{proof}
The structure of non-zero edge values are sketched on Figure 
\ref{fig:construction}.
\begin{figure}[htb]
\def\+#1{${+}\!#1$}
\begin{center}\begin{tikzpicture}[scale=0.66]
\begin{scope}[very thin]
\foreach \y in {0.5,...,5.5}{\draw(-0.8,\y)--(2.8,\y);
\draw[dotted](2.8,\y)--(4.2,\y); \draw(4.2,\y)--(7.8,\y);
\draw[dotted](7.8,\y)--(9.2,\y); \draw(9.2,\y)--(12.8,\y);
\draw[dotted](12.8,\y)--(14.2,\y); \draw(14.2,\y)--(17.8,\y);
}
\foreach \x in {0,1,2,3,5,6,7,8,10,11,12,13,15,16,17,18}{\draw(\x-0.5,6.2)--(\x-0.5,-0.2); }
\end{scope}
\begin{scope}[line width=1.5pt]
\foreach \x in {0,1,2,3}{
  \draw(5*\x-1,5.5-\x)--(5*\x+1.5,5.5-\x)--(5*\x+1.5,4.5-\x)--(5*\x+3,4.5-\x);
}
\end{scope}
\foreach \x in {0,1,2,3}{
  \draw[fill] (5*\x-0.5,5.5-\x) circle(4pt)
   (5*\x+0.5,5.5-\x) circle(4pt)
   (5*\x+1.5,5.5-\x) circle(4pt)
   (5*\x+1.5,4.5-\x) circle(4pt)
   (5*\x+2.5,4.5-\x) circle(4pt);
}
\draw(4,6.3) node[above]{$\overbrace{\hbox to 3cm{}}^{\textstyle \w_{k-1}}$};
\draw(9,6.3) node[above]{$\overbrace{\hbox to 3cm{}}^{\textstyle \w_{k}}$};
\draw(14,6.3) node[above]{$\overbrace{\hbox to 3cm{}}^{\textstyle \w_{k+1}}$};
%% top line 5
\begin{scope}[every node/.style={fill=white,font=\scriptsize,inner sep=1.5pt}]
\draw(-0.5,5)node{$\otimes$} (0.5,5) node{$1$};
% line 4
 \foreach \x in {-1,0,1}{\draw(\x+0.5,4)node{$V$};}
 \foreach \x in {2,4}{\draw(\x+0.5,4)node{$\otimes$};}
 \draw(5.5,4)node{$1$};
% line 3
 \foreach \x in {-1,0,1,2,4,5,6}{\draw(\x+0.5,3)node{$V$};}
 \foreach \x in {7,9}{\draw(\x+0.5,3)node{$\otimes$};}
 \draw(10.5,3)node{$1$};
%line 2
 \foreach \x in {-1,0,1,2,4,5,6,7,9,10,11}{\draw(\x+0.5,2)node{$V$};}
 \foreach \x in {12,14}{\draw(\x+0.5,2)node{$\otimes$};}
 \draw(15.5,2)node{$1$};
% line 1
 \foreach \x in {-1,0,1,2,4,5,6,9,10,11,12,14,15,16}{\draw(\x+0.5,1)node{$V$};}
 \draw(17.5,1)node{$\otimes$};
%% vertical lines
 \draw(1,5.5)node{$u$} 
   (0,4.5)node{\+u} (1,4.5)node{\+u};
 \foreach \y in {5.5,4.5}{
   \draw(2,\y)node{$u$} (5,\y)node{$u$} (6,\y)node{$z$};
 }
 \foreach \y in {5.5,4.5,3.5}{
   \draw(7,\y)node{$z$} (10,\y)node{$z$} (11,\y)node{$y$};
 }
 \foreach \y in {5.5,4.5,3.5,2.5}{
   \draw(12,\y)node{$y$} (15,\y)node{$y$} (16,\y)node{$x$}
     (17,\y)node{$x$};
 }
 \foreach \y in {3.5,2.5,1.5,0.5}{
    \draw(0,\y)node{\+u} (1,\y)node{\+u} (2,\y)node{\+z}
    (5,\y)node{\+z} (6,\y)node{\+z};
 }
 \foreach \y in {2.5,1.5,0.5}{
    \draw(7,\y)node{\+y} (10,\y)node{\+y} (11,\y)node{\+y};
 }
 \foreach \y in {1.5,0.5}{
    \draw(12,\y)node{\+x} (15,\y)node{\+x} (16,\y)node{\+x};
 }
 \draw(17,1.5)node{$x$};
\end{scope}
\end{tikzpicture}\end{center}

\vskip -12pt
\caption{Submodular function for the height 1 staircase}\label{fig:construction}
\end{figure}
The $+$ symbol before $x$, $y$, etc., indicates $+1$, for example, ${+}y$
means $y+1$. The value $V$ is the ``vertical'' value between the origin and
$(0,1)$. There are sequences of vertical edges marked by $\otimes$ between a $V$ and a $1$
edge; their values should be computed so that they form an arithmetical
progression starting with $V$ and ending with $1$.

Assume $V\ge 1$ and that all edge values are non-negative. The consistency condition
in (\ref{eq:shannon-diffs}) clearly holds everywhere except around the edges marked by
$\otimes$. For the block under $\w_k$ the consistency requires
$$
    V+(\w_k-1)z = (\w_k-1)(y+1) + 1,
$$
that is,
\begin{equation}\label{eq:staircase-sharp}
    z= y + \frac{\w_k-V}{\w_k-1}.
\end{equation}
For the submodularity property we also need $z\ge y$, that is, $V\le 
\w_k$. If both of them are satisfied then all requirements in (\ref{eq:shannon-diffs}) hold.

After the last staircase step the horizontal edge values can be chosen to be
zero ($x=0$ in the figure). Other horizontal edge values are determined by
(\ref{eq:staircase-sharp}) and by the $+1$ increment, thus the edge between
$(0,0)$ and $(1,0)$ has the value
$$
H=1+\sum_{k<\ell}\frac{\w_k-V}{\w_k-1}=\ell-(V-1)\sum_{k<\ell}\frac{1}{\w_k-1}.
$$
Choosing $V=\kappa_0$ we get $H=\kappa_0$, which gives the required
submodular function.
\end{proof}

In Theorem \ref{thm:matus-sharp} the assumption that all steps have
width at least $\kappa_0$ is necessary. The next theorem shows 
that if some intermediate stepsize is smaller than $\kappa_0$, then the
Shannon complexity is strictly larger than $\kappa_0$. It happens, for
example, when the width sequence is $(3,3,2,3)$, when
$\kappa_0=2+7/49$ while the Shannon complexity is $2+7/34 > \kappa_0$.
\begin{theorem}\label{thm:small-stepsize}
With the assumptions of Theorem \ref{thm:matus} suppose that some intermediate
stepsize is smaller than $\kappa_0$. Then $\kappa(\Gamma)>\kappa_0$.
\end{theorem}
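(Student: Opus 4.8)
The plan is to suppose, for contradiction, that the lower bound of Theorem~\ref{thm:matus} is attained, $\kappa(\Gamma)=\kappa_0$, and to show that this pins down $f$ so rigidly around the small step that one vertical edge is forced to be negative. Since the infimum defining $\kappa$ is a minimum, fix an optimal symmetric $f$, described by its edge values as in (\ref{eq:shannon-diffs}), with $\max(H,V)=\kappa_0$. First I would rerun the chain of inequalities from the proof of Theorem~\ref{thm:matus}. Writing $S=\sum_k 1/(\w_k-1)$, that chain gives $H+VS\ge \ell_\Gamma+S=\kappa_0(1+S)$, whereas $H\le\kappa_0$ and $V\le\kappa_0$ give $H+VS\le\kappa_0(1+S)$. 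Hence $H=V=\kappa_0$ and, crucially, \emph{every} inequality feeding the chain must be an equality.

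Next I would convert these equalities into explicit edge values along the two rows $j_k$ and $j_{k+1}=j_k-1$ bounding each step, tracing the equality cases inside the proofs of Lemma~\ref{lemma:increment} and the concavity Lemma~\ref{lemma:concave}. For the $k$-th link, tightness forces the following: the horizontal edges of the upper row between $C_k=(i_k,j_k)$ and $D_1=(i_{k+1}-1,j_k)$ are all equal to a common value $t_k$, with the relation $d_k=t_k+1$ for $d_k=C_k-B_k$; the vertical edge joining $(i_{k+1}-1,j_{k+1})$ to $(i_{k+1}-1,j_k)$ equals $1$; and $d_k=d_{k+1}+1-(V-1)/(\w_k-1)$. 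Eliminating the $d$'s via $d_k=t_k+1$ and $d_{k+1}=t_{k+1}+1$ yields the clean recursion
\[
  t_k-t_{k+1}=\frac{\w_k-V}{\w_k-1}.
\]
Because the small step is intermediate, the successor step $k+1$ exists, so $t_{k+1}$ is defined; and since $\w_k<\kappa_0=V$ the recursion gives $t_k<t_{k+1}$.

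Finally I would read off a contradiction at the corner $C_{k+1}=(i_{k+1},j_{k+1})$. Apply the consistency rule (\ref{eq:shannon-diffs}b) to the unit square on columns $i_{k+1}-1,i_{k+1}$ and rows $j_{k+1},j_k$. Its bottom edge is the edge from $B_{k+1}$ to $C_{k+1}$, of value $d_{k+1}=t_{k+1}+1$; its left edge is the vertical edge of value $1$ identified above; its top edge $w$ lies immediately to the right of the upper run of step $k$, hence $w\le t_k$ by submodularity; and its right edge $p$ is nonnegative. Consistency reads $1+w=d_{k+1}+p=(t_{k+1}+1)+p$, so $w=t_{k+1}+p\ge t_{k+1}$. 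Combined with $w\le t_k<t_{k+1}$ this gives $t_{k+1}\le w<t_{k+1}$, a contradiction. Therefore $\kappa(\Gamma)>\kappa_0$.

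The delicate part is the rigidity step, not the final computation: one has to verify that equality in the \emph{summed} chain really propagates to equality in each of the four sub-inequalities underlying Lemma~\ref{lemma:increment}, so that the constants $t_k$, the relation $d_k=t_k+1$, and the unit vertical edge at column $i_{k+1}-1$ are all genuinely forced. Once this bookkeeping is in place the corner square closes the argument in one line. This also explains the role of the word \emph{intermediate}: the argument needs the corner $C_{k+1}$ to be supplied by a genuine successor step $k+1$ carrying the lower edge $d_{k+1}$, which is exactly what the last step fails to provide.
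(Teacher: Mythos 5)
Your proof is correct, but it follows a genuinely different route from the paper's. The paper never analyzes the equality case: instead it lengthens the chain of Lemma \ref{lemma:increment} into a window spanning the three consecutive steps $\w_{k-1},\w_k,\w_{k+1}$ around the small step, proving inequality (\ref{eq:improvement}); this in effect replaces the term $(V-1)/(\w_k-1)$ in (\ref{eq:sum-all}) by $\bigl(V-1+(\w_{k-1}+\w_{k+1})\bigr)/\bigl(\w_k-1+(\w_{k-1}+\w_{k+1})\bigr)$, a strict improvement exactly when $\w_k<V$, so the bound $\kappa_0$ cannot be met. You instead invoke attainment of the infimum (legitimate: the paper states in the introduction that $\kappa$ is always a minimum, and the feasible set here is a finite-dimensional polyhedron), deduce $H=V=\kappa_0$ with equality in every link of the chain of Theorem \ref{thm:matus}, and close with a local consistency contradiction on one unit square at $C_{k+1}$. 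The step you flag as delicate does go through: equality in link $k$ is equality in a positive combination of the four sub-inequalities used to prove Lemma \ref{lemma:increment} (concavity on row $j_k$, concavity on row $j_{k+1}$, the strong monotonicity $D_1\ge B_2+1$, and $C_1\le A_2+V$), hence each is tight; this indeed forces the constant run $t_k=d_k-1$ on row $j_k$, the unit vertical edge at column $i_{k+1}-1$, and the recursion $t_k-t_{k+1}=(\w_k-V)/(\w_k-1)$, after which the square argument is a one-liner. Comparing the two: your argument is more elementary, isolates the structural reason the bound fails, and needs only a successor step (so it would even handle a small \emph{first} step, whereas the paper's three-step window needs $2\le k\le\ell_\Gamma-2$). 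The paper's argument buys something yours does not: it is quantitative, yielding an explicit improved lower bound on $\kappa(\Gamma)$ that does not presuppose attainment of the infimum, and this improved estimate is exploited right after the theorem (e.g.\ the value $2+7/34$ for the width sequence $(3,3,2,3)$), whereas your contradiction argument produces only the strict inequality.
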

\begin{proof}
Use the notation of Figure \ref{fig:small-stepsize}.
Let $\Delta=\w_{k-1}+\w_k+\w_{k+1}-1$, this is the
distance between $A_2$ and $E_2$, or $Z_3$ and $D_3$.
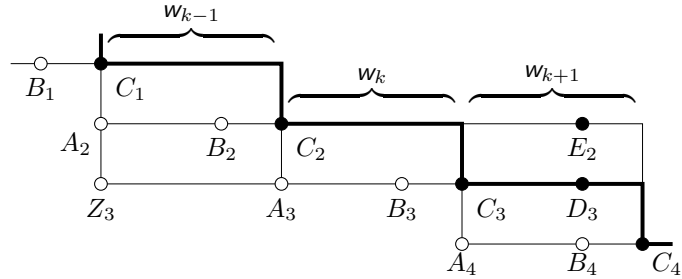
\begin{figure}[htb]
\begin{center}\begin{tikzpicture}[scale=0.8]
\draw[very thin](-3.5,1)--(-2,1)--(-2,0)--(1,0)--(1,-1)--
  (4,-1)--(4,-2)--(7,-2) (4,0)--(7,0)--(7,-1) (-2,0)--(-2,-1)--(1,-1);
\draw[line width=1.5pt](-2,1.5)--(-2,1)--(1,1)--(1,0)--(4,0)--
  (4,-1)--(7,-1)--(7,-2)--(7.5,-2);
\draw[fill=white] (-3,1) circle(3pt) (-2,0) circle(3pt) (0,0) circle(3pt)
      (1,-1) circle(3pt) (3,-1) circle(3pt) (-2,-1)circle(3pt)
      (4,-2) circle(3pt) (6,-2) circle(3pt);
\draw[fill](-2,1) circle(3pt) (1,0) circle(3pt) (4,-1) circle(3pt)
      (7,-2) circle(3pt) (6,0)circle(3pt) (6,-1)circle(3pt);
\draw(-0.5,1.15) node[above] {$\overbrace{\hbox to 2.2cm{}}^{\textstyle
\w_{k-1}}$};
\draw(2.5,0.15) node[above] {$\overbrace{\hbox to 2.2cm{}}^{\textstyle \w_k}$};
\draw(5.5,0.15) node[above] {$\overbrace{\hbox to
2.2cm{}}^{\textstyle\w_{k+1}}$};
\draw(-3,1) node[below=2pt]{$B_1$} (-2,1) node[below right=2pt]{$C_1$}
    (-2,0) node[below left]{$A_2$}
    (0,0) node[below=2pt] {$B_2$} (1,0) node[below right=2pt]{$C_2$}
    (6,0) node[below=2pt]{$E_2$} (-2,-1)node[below=2pt]{$Z_3$}
    (1,-1) node[below=2pt]{$A_3$} (3,-1) node[below=2pt]{$B_3$}
    (4,-1) node[below right=2pt]{$C_3$} (6,-1) node[below=2pt]{$D_3$}
    (4,-2) node[below]{$A_4$}
    (6,-2) node[below] {$B_4$} (7,-2) node[below right]{$C_4$};
\end{tikzpicture}\end{center}

\vskip -14pt
\caption{Case of a small stepsize}\label{fig:small-stepsize}
\end{figure}
The next two inequalities were actually proved in Lemma
\ref{lemma:increment}:
\begin{align*}
C_1-B_1 &\ge \frac{B_2-A_2}{\w_{k-1}-1}+1-\frac{V-1}{\w_{k-1}-1}\\
\frac{D_3-C_3}{\w_{k+1}-1} &\ge
   \frac{B_4-A_4}{\w_{k+1}-1} - \frac{V-1}{\w_{k+1}-1},
\end{align*}
and the following three ones follow from Lemma \ref{lemma:concave} easily:
\begin{align*}
  \frac{B_2-A_2}{\w_{k-1}-1} &\ge
    \frac{E_2-A_2}{\Delta}+1-\frac{\w_{k-1}}{\Delta}, \\
   \frac{D_3-Z_3}{\Delta} &\ge
    \frac{D_3-C_3}{\w_{k+1}-1}+\frac{\w_{k-1}+\w_k}{\Delta}, \\
   \frac{B_4-A_4}{\w_{k+1}-1} &\ge C_4-B_4.
\end{align*}
Finally, we have
$$
  \frac{E_2-A_2}{\Delta} \ge \frac{D_3-Z_3}{\Delta}-\frac{V}{\Delta}
$$
since $E_2\ge D_3$ and $A_2\le Z_3+V$.
Adding these inequalities up we get
\begin{align}\label{eq:improvement}
  C_1-B_1 &\ge (C_4-B_4) + 1-\frac{V-1}{\w_{k-1}-1} + {} \\
    & ~~ + 1-\frac{V-1}{w_{k+1}-1}
       + 1- \frac{V-1+(\w_{k-1}+\w_{k+1})}{\w_k-1+(\w_{k-1}+\w_{k+1})}
      .\nonumber
\end{align}
Looking back at the proof of Theorem \ref{thm:matus}, we see that using
(\ref{eq:improvement}), the right hand side of the inequality (\ref{eq:sum-all})
changes by
$$
    \frac{V-1}{\w_k-1}
-\frac{V-1+(\w_{k-1}+\w_{k+1})}{\w_k-1+(\w_{k-1}+\w_{k+1})}.
$$
When $\w_k<V$, this amount is positive (the second term is closer to 1
than the first one), thus the inequality in (\ref{eq:sum-all}) is strict.
Consequently $\kappa(\Gamma)>\kappa_0$ which proves the theorem.
\end{proof}

When some stepsize is below $\kappa_o$, then instead of (\ref{eq:sum-all})
one can use the improved estimate (\ref{eq:improvement})
to get a better lower bound on $\kappa(\Gamma)$. In some cases it gives the 
exact value, but not in every case.

%%%%%%%%%%%%%%%%%%%%%%%%%%%%%%%%%%%%%%%%%%%%%%%%%%%%%%%%%%%%%%%%%%%%%%%%%%%%%%%%

\section{Some linear bipartite schemes}\label{sec:linear-examples}

We were able to create linear schemes with optimal complexity for a very
sparse set of non-ideal
bipartite access structures. For these structures the linear, the entropic,
and Shannon complexities are the same.

\begin{theorem}\label{thm:kings-pawns}
Let $\Gamma$ be the regular staircase with height $\h=1$, width $\w\ge 2$,
and length $\ell=\ell_\Gamma\ge 2$ 
such that the first point $(i_1,j_1)$ is not on the $y$-axis. There is a linear scheme for
$\Gamma$ with complexity
$$
   1+\frac{\ell-1}{\displaystyle 1+\frac{\ell-1}{\w-1}},
$$
which matches the lower bound $\kappa_0$ on $\kappa(\Gamma)$ from Theorem \ref{thm:matus}.
\end{theorem}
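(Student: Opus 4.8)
The lower bound is free: Theorem~\ref{thm:matus} already gives $\kappa(\Gamma)\ge\kappa_0$, and since $\lambda(\Gamma)\ge\sigma(\Gamma)\ge\kappa(\Gamma)$, it suffices to exhibit a \emph{linear} scheme of complexity at most $\kappa_0$; equality then forces $\lambda=\sigma=\kappa=\kappa_0$, so all three complexities coincide as claimed. For the regular staircase the formula \eqref{eq:thm:matus} simplifies, since $\sum_k 1/(\w_k-1)=(\ell-1)/(\w-1)$, to
\[
  \kappa_0=\frac{\ell\w-1}{\ell+\w-2}.
\]
The plan is therefore to build subspaces $E_0,\{E_i\}_{i\in N}$ of a vector space over a large finite field $\F_q$ realizing $\Gamma$, scaled so that the secret $E_0$ has dimension $D=\ell+\w-2$ while \emph{every} participant subspace (on both sides) has dimension $\ell\w-1=D+(\ell-1)(\w-1)$. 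This common value is exactly $D$ times the shared optimum $H=V=\kappa_0$ of the extremal polymatroid of Theorem~\ref{thm:matus-sharp}, so the dimension of each union $E_I$ should match $D\cdot f(|I\cap N_1|,|I\cap N_2|)$ for the $f$ constructed there.

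The rank data of that polymatroid prescribes, for each minimal point $P_p=(i_1+p\w,\,j_1-p)$ and for each of the near-boundary unqualified configurations, the exact dimension the corresponding union of shares must have. I would realize these prescribed ranks by an explicit monotone span program, equivalently a generator matrix over $\F_q$ with $q$ large, built on an MDS/Vandermonde skeleton so that every ``generic'' rank is as large as possible, and then imposing precisely the linear dependencies dictated by the tight submodular equalities of $f$ --- the flat stretches of the staircase and the $\w$-to-$1$ trade-off between the two parts. In effect this asks whether the Shannon-optimal multipartite polymatroid of Theorem~\ref{thm:matus-sharp} admits a linear representation; a concrete realization via bivariate polynomial interpolation, assigning the $N_2$-participants evaluation data of ``$\w$-fold weight'' capped at $\ell-1$ levels in the spirit of hierarchical and compartmented threshold schemes, is the most promising explicit route.

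The verification then splits into reconstruction and privacy. For each of the $\ell$ minimal qualified points $P_p$ I would check that the span of its shares contains $E_0$, a full-rank condition on an explicit matrix; for each maximal unqualified set I would check that $E_0$ meets the span trivially. Over a sufficiently large field these reduce to the non-vanishing of finitely many determinants, which holds for a generic MDS choice \emph{provided} the prescribed rank pattern is internally consistent. A dimension count finally confirms that each share is exactly $(\ell\w-1)$-dimensional and $E_0$ is $D$-dimensional, so the information ratio is exactly $\kappa_0$.

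The hard part will not be the genericity --- a large $q$ disposes of that --- but producing a rank pattern that is \emph{simultaneously tight and representable}. Optimality forces many rank equalities to hold at once: each of the $\ell-1$ steps contributes its own near-boundary unqualified set whose privacy must be enforced without collapsing any share below dimension $\ell\w-1$ and without accidentally qualifying it. One must therefore impose a carefully coordinated web of linear dependencies and prove it neither over-constrains the matrix nor violates monotonicity of the induced access structure; equivalently, one must linearly represent the extremal polymatroid of Theorem~\ref{thm:matus-sharp}. I expect this consistency argument to be the core of the proof, and I would first calibrate the pattern on the smallest case $\ell=\w=2$, where $\kappa_0=3/2$ and the scheme should reduce to the single-step construction underlying Theorem~\ref{thm:result1}, before extending it to general $\ell$ and $\w$ by induction on the number of steps.
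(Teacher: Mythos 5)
Your bookkeeping is fine up to the point where the actual work starts: the lower bound via Theorem~\ref{thm:matus}, the simplification $\kappa_0=(\ell\w-1)/(\ell+\w-2)$, and the target dimensions (secret of size $\ell+\w-2$, every share of size $\ell\w-1$) are all correct. But the proposal contains no construction: everything is deferred to an unspecified ``MDS/Vandermonde skeleton'' or ``bivariate interpolation,'' and you yourself flag the required consistency argument as the expected core of the proof. That is the gap. Moreover, the reduction you commit to is strictly stronger than what the theorem needs and is itself unproven: you ask that the union of shares of \emph{every} set $I$ have dimension exactly $D\cdot f(|I\cap N_1|,|I\cap N_2|)$, where $f$ is the extremal submodular function of Theorem~\ref{thm:matus-sharp}. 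Nothing in the paper or in your argument shows that this Shannon-optimal polymatroid has a linearly representable (or even entropic) multiple, and Shannon-optimal polymatroids in general need not have one --- this is precisely why $\kappa$ can differ from $\sigma$ and $\lambda$. A matching linear scheme only has to realize $\Gamma$ with the right secret size and maximal share size; forcing it to reproduce the ranks of $f$ everywhere imposes exactly the ``web of dependencies'' whose feasibility you leave open.

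The paper avoids this issue entirely by a decomposition argument (citing \cite{padro:optimization}): it exhibits two very simple linear schemes, each realizing $\Gamma$ over every finite field, and concatenates copies of them. Scheme $h_1$ gives size-$1$ shares to $N_1$ and size-$\w$ shares to $N_2$: it is the ideal bipartite scheme of Section~\ref{subsec:ideal-bipartite} for the structure on $N_1$ together with $\w|N_2|$ virtual participants, with thresholds $i_1$ on the first part, $\w j_\ell$ on the second, and $t=i_k+\w j_k$ overall, each real $N_2$-participant holding a block of $\w$ virtual shares. Scheme $h_2$ gives size-$\ell$ shares to $N_1$ and size-$1$ shares to $N_2$: the secret is a sum of two independent values, the first shared by an $i_1$-out-of-$N_1$ threshold scheme, the second by a $j_1$-out-of-$(|N_2|+\ell-1)$ threshold scheme whose $\ell-1$ leftover shares are re-shared among $N_1$ via $i_2,\dots,i_\ell$-out-of-$N_1$ threshold schemes. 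The combination $(\ell-1)h_1+(\w-1)h_2$ then has secret size $\ell+\w-2$ and every share of size $(\ell-1)+(\w-1)\ell=(\ell-1)\w+(\w-1)=\ell\w-1$, i.e., complexity exactly $\kappa_0$. Your instinct that the optimum balances $H=V$ is exactly what the weights $(\ell-1,\w-1)$ accomplish; to repair your proof, aim for such a two-scheme decomposition rather than a direct linear representation of the extremal polymatroid.
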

\begin{proof}
As explained in \cite[Theorem 5]{padro:optimization}, the scheme will be an
integer linear combination of schemes $h_1$ and $h_2$ defined below, both realizing
$\Gamma$. As both $h_1$ and $h_2$ can be represented over any finite field, this
combination gives the required linear secret sharing scheme.
The schemes distribute shares corresponding to some secret among the
participants in $N_1\cup N_2$ such that qualified subsets of the
regular staircase can recover the secret, while unqualified subsets
have no information on the secret. The share size (relative to the secret
size), however, will not be uniform. In the first scheme $h_1$ participants
in $N_1$ get single size shares, while participants in $N_2$ get shares of
size $\w$. In the second scheme $h_2$ it is the other way around:
participants in $N_2$ get single size shares, while participants in $N_1$
get shares of size $\ell$.

The idea is to combine several independent instances of these schemes --
assuming that they share the same secret size, which can be done in this
case. Executing $\alpha$ copies of $h_1$ and $\beta$ copies of $h_2$
distributes $\alpha{+}\beta$ many independent secrets. The total share size
of a participant from $N_1$ is $\alpha+\beta\cdot\w$ times the size of a
single secret, while for participants from $N_2$ this number is
$\alpha\cdot\ell+\beta$. Choosing $\alpha=\w-1$ and $\beta=\ell-1$ balances
these numbers to be
$$
   (\w-1)+(\ell-1)\w = (\w-1)\ell+(\w-1)=(\ell-1)(\w-1)+\ell+\w-2.
$$
Since the this combined scheme distributes $\alpha{+}\beta=\ell+\w-2$ many
secrets, its complexity is
$$
   1+\frac{(\ell-1)(\w-1)}{\ell+\w-2}=1+\frac{\ell-1}{\displaystyle 
        1+\frac{\ell-1}{\w-1}},
$$
matchning the value stated in the Theorem. It remains to describe the
two sub-schemes.

\textbf{Scheme 1} is an adaptation of the ideal bipartite scheme $\Gamma_3$ from
Section \ref{subsec:ideal-bipartite}, but see also \cite{padro:bipartite}. Let
$t=i_k+\w\cdot j_k$ (as $\Gamma$ is a regular staircase with height 1, this amount is
independent of $k$) and consider the following ideal bipartite access structure on
$N_1+\w\cdot N_2$ participants: a qualified set requires at least $i_1$
participants from the first set, at least $\w\cdot j_\ell$ participants from
the second set, and at least $t$ participants all together, see Figure
\ref{fig:ideal2}. To get $h_1$
form groups of size $\w$ from the second set, assigning all shares of a
group to a single participant from $N_2$.

In $h_1$ a the share size is one for a participant from $N_1$, and $\w$ for
a participant from $N_2$, as was claimed. To show that $h_1$ realizes the
access structure $\Gamma$, observe first that any qualified set in $h_1$ 
must have at least $i_1$ participants from the first set, and at least $j_\ell$
participants from the second set. When this holds, taking $i$ participants 
from $N_1$ and $j$ participants from $N_2$, $i\ge i_1$ and $j\ge j_\ell$
will hold, and this group forms a $h_1$-qualified set iff the
$i+j\w$ many shares they possess is above the threshold $t$. But this
happens iff $i\ge i_k$ and $j\ge j_k$ for some $k$, that is, if and only the 
point $(i,j)$ is in $\Gamma$.

\textbf{Scheme 2} is constructed as follows. The secret is a sum of two
independent values. The first one is distributed among members of $N_1$ using an $i_1$
out of $N_1$ threshold scheme. The second value is distributed using a $j_1$
out of $|N_2|+\ell-1$ threshold scheme. $|N_2|$ of the shares are given to
members of
$N_2$; the remaining $\ell-1$ shares are distributed among members
of $N_1$ as follows: one share is distributed using an $i_2$ out of $N_1$
threshold scheme, the second one by an $i_3$ out of $N_1$ threshold scheme,
and the last one by an $i_\ell$ out of $N_1$ threshold scheme. Every member
of $N_2$ gets a single share, while members of $N_1$ get $\ell$
shares.

Now we claim that $h_2$ also realizes $\Gamma$. A qualified set in $h_2$ must
recover both secret values. Recovering the first one requires at least $i_1$
members from $N_1$. Recovering the second value requires $j_1$ shares. Those
shares might come from $j_1$ members from $N_2$. They might also come from $j_1-1$
members from $N_2$, and the missing share can be recovered by $i_2$ members of
$N_1$. Similarly, the second
value can be recovered by $j_1-k$ members from $N_2$ and at least
$i_k$ members from $N_1$ for any $k\le\ell$. It shows that elements of
$\Gamma$ are qualified in $h_2$. The reverse follows from the fact that
recovering the second secret value
by $j_1-k$ members from $N_2$ requires at least $i_k$ members from $N_2$
\end{proof}

The rest of this section describes a linear scheme for a particular
bipartite access structure. For more clarity the construction uses vector spaces
over reals rather than over some finite field. This can be done as, by a
compactness argument, polymatroids representable over the reals are also
representable over 
some finite field whose characteristics can be chosen to be arbitrarily large.

In the constructions vectors contain unspecified variables. Their values
should be chosen to be \emph{generic}, by which we mean that considering all
vectors as rows of a huge matrix, if the determinant of any $k\times k$
submatrix is not a constant (that is, the determinant contains at least one
of the unspecified variable), then the determinant should differ from zero.
This can always be achieved, for example, by choosing all unspecified values
to be algebraically independent.

\begin{proposition}\label{const:30-11-03}
The complexity of the bipartite access
structure $\Gamma$ defined by the points $(0,3)$, $(1,1)$, $(3,0)$ is $3/2$.
\end{proposition}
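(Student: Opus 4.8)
The statement asks for the exact complexity $\kappa(\Gamma)=3/2$ of the bipartite structure $\Gamma$ with minimal points $(0,3)$, $(1,1)$, $(3,0)$. This is a two-sided matching problem: I must establish both a lower bound $\kappa(\Gamma)\ge 3/2$ and produce a concrete linear scheme of complexity $3/2$ for the upper bound. Since the preceding results (Corollary \ref{corr:simple}, Theorems \ref{thm:matus}, \ref{thm:matus-sharp}) deal with staircases of height $1$ or with equal width and height, and here the heights are $2$ and $1$ and the widths $1$ and $2$, none applies directly; so the lower bound will have to be extracted by hand from the concavity and strong-submodularity machinery of Lemmas \ref{lemma:concave} and \ref{lemma:increment}.

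The plan for the lower bound is to chase the values of a hypothetical optimal $f$ on a short Manhattan path running through the staircase. I would set $H=f(1,0)$ and $V=f(0,1)$ and examine the qualified/unqualified pattern near the three minimal points. The step of width $\w=2$ between $(1,1)$ and $(3,0)$ has its left point off the axes, so Corollary \ref{corr:simple} already gives the crude bound $\kappa(\Gamma)\ge 2-1/2=3/2$ — in fact this single step delivers exactly the target value. I would verify that the hypotheses of Corollary \ref{corr:simple} are met (the point $(1,1)$ has positive coordinates, $\w=i_2-i_1=3-1=2\ge 2$), which immediately yields $\max(H,V)\ge 3/2$ and hence $\kappa(\Gamma)\ge 3/2$ with no further work.

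The substantive part is the upper bound: I must exhibit an explicit linear scheme of complexity $3/2$, i.e. with secret size $2$ and every essential share of size at most $3$. The approach, following the method sketched in the final paragraphs of the excerpt, is to build subspaces over $\R$ (then transfer to a large prime field by the stated compactness argument) with generically chosen coordinates. I would assign to the secret a $2$-dimensional space $E_0$, and design the participant subspaces so that exactly the sets dominating one of $(0,3)$, $(1,1)$, $(3,0)$ span $E_0$. Concretely I expect to combine a small threshold-type ingredient handling the pure-$N_2$ corner $(0,3)$ and the pure-$N_1$ corner $(3,0)$ with a mixed ingredient handling the interior point $(1,1)$; the generic-position hypothesis guarantees that no accidental dependencies let a smaller set recover the secret. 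The verification splits into two routine but unavoidable checks: every minimal qualified set must have $E_0\subseteq E_I$, and every maximal unqualified set must have $E_0\cap E_I=0$ (perfectness).

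The main obstacle I anticipate is the upper bound's dimension bookkeeping, not the lower bound. Balancing the contributions of the two ingredients so that each share stays within size $3$ while still forcing recovery at $(1,1)$ — where only one participant from each side is present — is delicate, because the interior point mixes a single $N_1$-participant with a single $N_2$-participant and the generic dependencies must be arranged to trigger precisely there and nowhere weaker. I would resolve this by writing down explicit generic vectors for a representative participant of $N_1$ and of $N_2$, computing the rank of the span for each of the finitely many access-threshold configurations, and checking the strong (qualified/unqualified) dichotomy case by case; the regularity of the access structure then extends each check from representatives to all participants.
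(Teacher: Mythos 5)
Your lower bound is complete and is exactly the paper's: the step from $(1,1)$ to $(3,0)$ has width $2$ and starts off the $y$-axis, so Corollary \ref{corr:simple} applies and gives $\kappa(\Gamma)\ge 2-1/2=3/2$.

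The upper bound, however, is where the entire content of the proposition lies, and there you offer a plan rather than a proof. You name the right target (secret dimension $2$, shares of dimension $3$, generic coordinates over $\R$, transfer to a large prime field by compactness) and the right verification obligations, but you never exhibit the subspaces, and you explicitly defer the ``delicate'' part --- making a single pair $(a,b)\in N_1\times N_2$ recover the secret while any two participants on the same side learn nothing --- which is precisely the missing idea. Worse, the one concrete design you do suggest, a threshold-type ingredient for the corners $(0,3),(3,0)$ combined with a separate mixed ingredient for $(1,1)$, naturally amounts to writing $\Gamma$ as a union of two ideal structures and sharing the secret independently in each; that scheme is perfect but gives every participant two unit shares per unit of secret, i.e.\ complexity $2$, not $3/2$. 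The paper's construction is not a combination of ingredients: it is one unified scheme in $\R^7$ in which $E_0$ is a generic $2$-dimensional subspace of the $3$-space $E$ spanned by the first three coordinates, each $E_a$ consists of two generic rows of the form $(1,0,\alpha_i,\ast,\ast,\ast,\ast)$ together with the crossover row $(0,1,0,0,0,0,0)$, symmetrically for $E_b$, and the zero pattern in coordinates $4$--$7$ is tuned so that simultaneously (a) three shares from one side span $E$, (b) one share from each side spans $E$ (via an explicit elimination with coefficients $(\beta_3,\beta_4,-\alpha_3,-\alpha_4)$), and (c) two shares from the same side meet $E_0$ trivially. Certifying that all three conditions can hold at once with $3$-dimensional shares is the actual work of the proof, and nothing in your proposal does it.
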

\noindent
\begin{proof}
As $\kappa(\Gamma)\ge 3/2$ by Corollary \ref{corr:simple}, it is enough to
construct a linear scheme with this complexity.
We will work in the 7-dimensional vector space $\R^7$, as explained above. 
Participant $a\in
N_1$ will be assigned the 3-dimensional subspace $E_a$, and participant $b\in N_2$
will be assigned another 3-dimensional subspace $E_b$. The secret is a 2-dimensional subspace
$E_0$. This arrangement realizes the above bipartite access structure if
\begin{itemize}
\item[(a)] $E_0$ is in the linear hull of the subspaces assigned to any three participants
   from $N_1$ -- or any three participants from $N_2$.
\item[(b)] for every $a\in N_1$ and $b\in N_2$ the linear hull of 
       $E_a\cup E_b$ contains $E_0$;
\item[(c)] the linear hull of any two subspaces assigned to participants from $N_1$
     (or both form $N_2$) intersect $E_0$ trivially.
\end{itemize}
The subspace assigned to $a\in N_1$ and $b\in N_2$, respectively, will be spanned 
by the rows of the following matrices with seven columns:
$$
   E_a = \left(
   \begin{array}{c@{~}c@{~}c@{~~~}c@{~}c@{~}c@{~}c}
     \,1\,&\,0\,&\alpha_1& \alpha_3&\alpha_4 & 0 & 0 \\
     1&0&\alpha_2& 0 & 0 & \alpha_3& \alpha_ 4 \\
     0 & 1 & 0 & 0 & 0 & 0 & 0
   \end{array}\right)
   \qquad
   E_b=\left(
   \begin{array}{c@{~}c@{~}c@{~~~}c@{~}c@{~}c@{~}c}
     \,0\,&\,1\,&\beta_1& \beta_3&0&\beta_4  & 0 \\
     0&1&\beta_2& 0 & \beta_3&0& \beta_ 4 \\
     1 & 0 & 0 & 0 & 0 & 0 & 0
   \end{array}\right),
$$
while the secret space is spanned by the row vectors of
$$
   E_0=\left(
   \begin{array}{c@{~}c@{~}c@{~~~}c@{~}c@{~}c@{~}c}
     \zeta_1 & \zeta_2 & \zeta_3 & 0 & 0 & 0 & 0 \\
     \zeta_4 & \zeta_5 & \zeta_6 & 0 & 0 & 0 & 0
   \end{array}\right),
$$
where $\alpha_i$, $\beta_j$, $\zeta_k$ are the generic variables as explained
above.

(a) We show that the linear span of any three subspaces assigned to members of $N_1$
contain $E_0$; the case for $N_2$ is similar. The linear span of the row vectors
$$\left(
   \begin{array}{c@{~}c@{~}c@{~~~}c@{~}c@{~}c@{~}c}
     \,1\,&\,0\,&\alpha_1& \alpha_3&\alpha_4 & 0 & 0 \\[2pt]
     1&0&\alpha'_1& \alpha'_3&\alpha'_4 & 0 & 0 \\[2pt]
     1&0&\alpha''_1& \alpha''_3&\alpha''_4 & 0 & 0 
   \end{array}\right)
$$
contains the vector $(10\gamma_1\,0000)$ for some generic $\gamma_1$, and
similarly, the linear span of the row vectors
$$\left(
   \begin{array}{c@{~}c@{~}c@{~~~}c@{~}c@{~}c@{~}c}
     \,1\,&\,0\,&\alpha_2& 0&0&\alpha_3&\alpha_4 \\[2pt]
     1&0&\alpha'_2&0&0& \alpha'_3&\alpha'_4 \\[2pt]
     1&0&\alpha''_2&0&0& \alpha''_3&\alpha''_4 
   \end{array}\right)
$$
contains the vector $(10\gamma_2\,0000)$ for another generic $\gamma_2$.
(Actually, in both cases the same linear combination can be used.)
Thus the vectors $(100\,0000)$, $(010\,0000)$ and
$(001\,0000)$ are in the linear space spanned by $E_a$, $E_{a'}$ and
$E_{a''}$, and then so is $E_0$.

\smallskip
(b) One participant from the first group and one from the second one determine
the secret. The linear span of their subspaces contains the row vectors of
the matrix
$$\left(
   \begin{array}{c@{~}c@{~}c@{~~~}c@{~}c@{~}c@{~}c}
     \,0\,&0&\alpha_1& \alpha_3&\alpha_4 &0&0\\
     0&0&\alpha_2& 0 & 0 & \alpha_3& \alpha_ 4 \\
     0&0&\beta_1& \beta_3&0&\beta_4  & 0 \\
     0&0&\beta_2& 0 & \beta_3&0& \beta_ 4
   \end{array}\right)
$$
Taking their linear combination with coefficients
$(\beta_3,\beta_4,-\alpha_3,-\alpha_4)$ one gets the vector $(00\gamma\,0000)$
for some generic $\gamma$, thus $E_0$ is indeed inside their linear span.

\smallskip
(c) Let $E$ be the 3-dimensional subspace of vectors with the last four
coordinate equal to zero. As $E_0$ is a generic 2-dimensional subspace of $E$,
multiples of $(010\,0000)$ intersect $E_0$ trivially. We claim that the span 
of the remaining four vectors
$$\left(\begin{array}{c@{~}c@{~}c@{~~~}c@{~}c@{~}c@{~}c}
     \,1\,&\,0\,&\alpha_1& \alpha_3&\alpha_4 & 0 & 0 \\
     1&0&\alpha_2& 0 & 0 & \alpha_3& \alpha_4 \\
     1&0&\alpha'_1& \alpha'_3&\alpha'_4 & 0 & 0 \\[2pt]
     1&0&\alpha'_2& 0 & 0 & \alpha'_3& \alpha'_4
   \end{array}\right)
$$
intersect $E$ trivially. Indeed, no nontrivial linear combination makes the
last four coordinates zero as $(\alpha_3,\alpha_4)$ and $(\alpha'_3,\alpha'_4)$ are linearly
independent. Now $E_0$ is a subspace of $E$, therefore 
the linear span of $E_a$ and $E_b$ intersects $E_0$ trivially, as was
required.
\end{proof}

%%%%%%%%%%%%%%%%%%%%%%%%%%%%%%%%%%%%%%%%%%%%%%%%%%%%%%%%%%%%%%%%%%%%%%%%%%%%%%%%%%
\section{Continuous submodular optimization}\label{sec:continuous}

Estimating the Shannon complexity of bipartite access structures can be
considered to be a discrete variant of a continuous submodular optimization
as has been discussed in, e.g., \cite{fujishige,hazan-kale}. The intuition
is scaling down the non-negative lattice so that the edge size becomes
negligible and take a bird's eye view. The continuous analog of a 
bipartite rank function is thus a real
function defined on the non-negative quadrant satisfying conditions
reflecting the conditions in (\ref{eq:shannon}) for discrete rank functions,
see Definition \ref{def:abcd}. These rank functions turn out to be
continuous and non-decreasing, consequently have both left and 
right partial derivatives, see Proposition \ref{prop:a2}.

An access structure $\Gamma$ specifies the qualified and unqualified points.
For the continuous case considered here $\Gamma$ is defined by a strictly
decreasing continuous curve connecting points on the coordinate axes.
Unqualified points are below the curve, and qualified points are above and
to the right of the curve.

For intuition how to specify whether a rank function $f\in\G$ \emph{realizes} an
access structure $\Gamma$ we turn to part d) of Lemma \ref{lemma:concave}.
It claims 
$$
   (B-A)/k \ge (D-C)/m +1
$$
assuming $A$, $B$, $C$, $D$ are, in this order, lattice points on a line,
where $A$ and $B$ are
unqualified and $C$, $D$ are qualified. Let $(u,v)$ be a boundary point
of $\Gamma$, and choose $A$, $B$, $C$, $D$ on a line parallel to the $x$ axis
so that $(u,v)$ is between $B$ and $C$. If all the points tend to $(u,v)$, the
fraction $(B-A)/k$ tends to the left partial derivative of $f$ at $(u,v)$
while $(D-C)/m$ tends to the right partial derivative. Thus, in the limit,
the above inequality says that $f^-_x(u,v)\ge f^+_x(u,v)+1$. Accordingly,
Definition \ref{def:realize} stipulates that the rank function $f\in\G$ realizes
$\Gamma$ if at every internal boundary point of $\Gamma$, both partial derivatives
of $f$ should drop by at least $1$.

Finally, the complexity of $f\in\G$, corresponding to the maximal share size
$\max\{H,V\}$ in the discrete case, is clearly should be $\max\{ f^+_x(0,0), f^+_y(0,0)\}$. The
continuous version of finding the Shannon complexity of a bipartite access
structure thus can be spelled out as follows.

\begin{optproblem}
For an access structure $\Gamma$, determined by the curve $\alpha$, 
determine the optimal complexity of continuous rank functions
realizing $\Gamma$.
\end{optproblem}

The rest of this section is organized as follows. First, the family of
continuous rank functions is defined, followed by propositions establishing
some of their basic properties. The main result is Theorem \ref{thm:cont0}
giving a general lower bound for this Optimization Problem in terms of the
curve $\alpha$. This bound is tight when $\alpha$ is linear. The section
concludes with a few remarks and open problems.

\begin{definition}\label{def:abcd}
The family $\G$ of \emph{continuous bipartite rank functions} consists of
real functions $f$ defined on the non-negative
quadrant $[0,+\infty)^2$ satisfying conditions a)--d) below.
\begin{itemize}
\item[a)] $f$ is pointed, that is, $f(0,0)=0$;
\item[b)] $f$ is non-decreasing: for $0\le u_1\le u_2$  and $0\le v_1\le v_2$ we
have $f(u_1,v_1)\le f(u_2,v_2)$;
\item[c)] $f$ is concave separately in both coordinates: for $0< t < 1$ and
$\hat t=1-t$,
\begin{align*}
   f(tu_1+\hat t u_2,v) &\ge tf(u_1,v)+\hat t f(u_2,v), \\
   f(u,tv_1+\hat t v_2) &\ge tf(u,v_1)+\hat t f(u,v_2);
\end{align*}
\item[d)] $f$ is submodular: for $~0\le u_1\le u_2$ and $ 0\le v_1\le v_2$ 
\begin{equation}\label{eq:c4}
  f(u_1,v_2)+f(u_2,v_1)\ge f(u_1,v_1)+f(u_2,v_2).
\end{equation}
\end{itemize}
\end{definition}

The class $\G$ is closed for non-negative linear combinations.
Moreover, if $f\in\G$ and $M$ is a non-negative constant, then
$\min(f,M)\in\G$. Consequently $f(u,v)=\min\{c_u\,u+c_v\,v, M\}$ is in $\G$
for every positive $c_u$, $c_v$ and $M$.

The right and left partial derivatives of $f\in G$, if exist, are denoted by
$f^+_x$, $f^+_y$ and $f^-_x$, $f^-_y$, respectively. Some properties of
functions in $\G$, similar to those of discrete bipartite rank functions,
follow from the definition above.

\begin{proposition}\label{prop:a1}
$f$ is concave and increasing along any positive direction:
if $0\le u_1\le u_2$, $0\le v_1\le v_2$, $0\le t\le 1$ and $\hat t=1-t$, then
$$
  f(tu_1+\hat tu_2,tv_1+\hat tv_2)\ge tf(u_1,v_1)+\hat t f(u_2,v_2).
$$
\end{proposition}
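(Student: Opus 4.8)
The plan is to reduce the two-dimensional concavity statement to the one-dimensional concavity already assumed in part c) of Definition \ref{def:abcd}, together with the submodularity in part d). The target inequality compares the value of $f$ at the interior point $(tu_1+\hat t u_2,\,tv_1+\hat t v_2)$ against the convex combination $tf(u_1,v_1)+\hat t f(u_2,v_2)$. The natural idea is to travel from the corner $(u_1,v_1)$ to the corner $(u_2,v_2)$ along a monotone path and control the value of $f$ at each coordinate separately, exploiting that $f$ is separately concave in each variable. The clean way to do this is to introduce an intermediate ``staircase'' point and split the comparison into a horizontal move followed by a vertical move.

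Concretely, I would first set $\bar u = tu_1+\hat t u_2$ and $\bar v = tv_1+\hat t v_2$, and apply separate concavity in the $x$-direction (at fixed height $v_1$) and in the $y$-direction (at fixed abscissa $u_2$) to obtain
\begin{align*}
  f(\bar u, v_1) &\ge t f(u_1,v_1) + \hat t f(u_2,v_1), \\
  f(u_2, \bar v) &\ge t f(u_2,v_1) + \hat t f(u_2,v_2).
\end{align*}
These two bounds are exactly the one-variable concavity along the bottom edge and the right edge of the rectangle with corners $(u_1,v_1)$ and $(u_2,v_2)$. The next step is to stitch them together so that both $f(u_1,v_1)$ and $f(u_2,v_2)$ appear with the correct weights $t$ and $\hat t$, while the unwanted ``mixed'' term $f(u_2,v_1)$ cancels. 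I expect the linking step to be where submodularity enters: the inequality \eqref{eq:c4} applied to the pair $u_1\le u_2$, $v_1\le v_2$ relates the corner values, and a further application of separate concavity at the interior point $(\bar u,\bar v)$ should move from the edges inward to the diagonal point $f(\bar u,\bar v)$.

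The main obstacle I anticipate is bookkeeping of the weights so that the cross term $f(u_2,v_1)$ drops out cleanly: naively adding the two displayed inequalities produces $t f(u_2,v_1)$ on one side and $t f(u_2,v_1)$ on the other only if one is careful about which edge is used, so the correct choice is to move horizontally along $v=\bar v$ and vertically along $u=\bar u$ rather than along the outer edges. Thus I would instead apply separate concavity at the \emph{interior} lines: horizontal concavity at height $\bar v$ gives $f(\bar u,\bar v)\ge t f(u_1,\bar v)+\hat t f(u_2,\bar v)$, and vertical concavity at abscissae $u_1$ and $u_2$ bounds $f(u_1,\bar v)$ and $f(u_2,\bar v)$ from below by convex combinations of the corner values. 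Substituting these in and collecting terms yields, after invoking submodularity \eqref{eq:c4} once to discard the off-diagonal contribution $t\hat t\big(f(u_1,v_2)+f(u_2,v_1)-f(u_1,v_1)-f(u_2,v_2)\big)\ge 0$, precisely the desired bound $f(\bar u,\bar v)\ge t f(u_1,v_1)+\hat t f(u_2,v_2)$. The remaining work is purely algebraic rearrangement, which I would not spell out in full.
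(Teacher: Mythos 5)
Your proof is correct and is essentially the paper's own argument with the roles of the two coordinates exchanged: the paper applies concavity along the two horizontal edges $v=v_1$, $v=v_2$ and then vertically along the interior line $u=tu_1+\hat t u_2$, whereas you apply concavity along the two vertical edges $u=u_1$, $u=u_2$ and then horizontally along the interior line $v=tv_1+\hat t v_2$, with both arguments concluding by the same single application of submodularity (\ref{eq:c4}) at the four corners, carrying the coefficient $t\hat t$. The bottom-edge/right-edge attempt you describe and discard in your second paragraph is indeed a dead end, but your final route is exactly the paper's computation up to this transposition.
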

\begin{proof}
\begin{figure}[tb]
\hfill\begin{tikzpicture}[x={(1.2,0)},y={(0,0.6)}]
\draw(0,0)--(3,0)--(3,3)--(0,3)--cycle;
\draw (2,0)--(2,3) (0,2)--(3,2);
\draw[fill] (0,0) circle (2pt) (3,3) circle (2pt) (2,2) circle (2pt);
\draw (-0.1,0) node[below] {$A_1$} (2,0) node[below] {$A_2$} (3.1,0) node[below]
{$A_3$};
\draw (-0.1,3) node[above] {$B_1$} (2,3) node[above] {$B_2$} (3.1,3) node[above]
{$B_3$};
\draw (2,2) node[above right] {$C$};
\draw (1,0.35) node {$\hat t$} (2.5,0.25) node {$t$};
\draw (-0.2,1) node {$\hat t$} (-0.2,2.5) node {$t$};
\end{tikzpicture}\hfill\hbox{}\par
\caption{$f\in\G$ is concave in positive directions}\label{fig:pos-concave}
\end{figure}
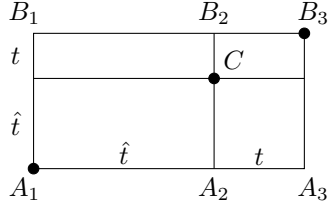
Refer to Figure \ref{fig:pos-concave} where $A_1$, $B_2$ are the points
with coordinates $(u_1,v_1)$ and $(u_2,v_2)$, respectively. Concavity along
the $x$ and $y$ coordinates give
\begin{align*}
   t f(A_1)+\hat tf(A_3) & \le f(A_2), \\
   t f(B_1)+\hat tf(B_3) &\le  f(B_2), \\
   t f(A_2)+\hat tf(B_2) & \le f(C).
\end{align*}
Multiplying the first inequality by $t$, the second one by $\hat t$, and using
(\ref{eq:c4}) to get
$$
    f(A_1)+f(B_3)\le f(B_1)+f(A_3),
$$
the required inequality follows.
\end{proof}

Note that the function $f(u,v)=\min(u,1)\cdot\min(v,1)$ satisfies properties a)--c)
of Definition \ref{def:abcd}, while does not satisfy d) as $f$ is not concave in the 
$(1,1)$ direction.

\begin{proposition}\label{prop:a2}
Partial derivatives of $f\in\G$ exist (allowing the value $+\infty$ at the
boundary), they are non-negative and non-increasing in both coordinates.
\end{proposition}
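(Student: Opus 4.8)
The plan is to reduce everything to the one–variable theory of concave functions, bringing in submodularity only for the cross–coordinate monotonicity. First I would fix the second coordinate $v$ and note that, by part c) of Definition \ref{def:abcd}, the map $u\mapsto f(u,v)$ is concave on $[0,+\infty)$. The classical theory of concave functions of one real variable then yields the one–sided derivatives at every point: the slope $\frac{f(u_2,v)-f(u_1,v)}{u_2-u_1}$ is non-increasing in each of its endpoints, so letting $u_2\downarrow u_1$ (respectively $u_1\uparrow u_2$) exhibits $f^+_x(u,v)$ (respectively $f^-_x(u,v)$) as a monotone limit of slopes. At an interior point these limits are finite; only at the left boundary $u=0$ can the increasing family of slopes have supremum $+\infty$, which accounts for the clause about boundary values. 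Running the same argument on $v\mapsto f(u,v)$ produces $f^+_y$ and $f^-_y$. Non-negativity is then immediate from part b): every difference quotient is non-negative, hence so is each limit.

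Monotonicity in the \emph{same} coordinate is just the monotonicity of slopes of a concave function. For fixed $v$, since the slopes of $u\mapsto f(u,v)$ are non-increasing, both $u\mapsto f^+_x(u,v)$ and $u\mapsto f^-_x(u,v)$ are non-increasing; symmetrically the $y$-derivatives are non-increasing in $v$. The substance of the proposition is monotonicity in the \emph{other} coordinate, and this is exactly where part d) enters. Rewriting (\ref{eq:c4}) for $u_1<u_2$ and $v_1\le v_2$ as $f(u_2,v_1)-f(u_1,v_1)\ge f(u_2,v_2)-f(u_1,v_2)$ and dividing by $u_2-u_1>0$ shows that the horizontal slope $\frac{f(u_2,v)-f(u_1,v)}{u_2-u_1}$ is non-increasing in $v$. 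Since this holds for \emph{every} admissible $u_2>u_1$, I would pass to the limit $u_2\downarrow u_1$: by concavity both sides converge monotonically to $f^+_x(u_1,v_1)$ and $f^+_x(u_1,v_2)$, so the inequality survives and gives $f^+_x(u_1,v_1)\ge f^+_x(u_1,v_2)$. Taking instead $u_1\uparrow u_2$ yields the same for $f^-_x$, and the mirror manipulation of (\ref{eq:c4}) shows $f^+_y$ and $f^-_y$ are non-increasing in $u$.

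The only delicate point — and it is bookkeeping rather than a genuine obstacle — is the passage to the limit and the consistent treatment of the extended value $+\infty$ on the boundary. Both are harmless: because the relevant difference quotients converge \emph{monotonically} (a feature of concavity), the limiting inequality is automatic and no uniformity hypothesis is needed; and when the larger side of one of these inequalities equals $+\infty$, the inequality holds trivially in the extended reals. Thus the three asserted properties — existence (with $+\infty$ allowed only at the boundary), non-negativity, and non-increase in both coordinates — all follow.
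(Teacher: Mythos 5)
Your proof is correct and follows essentially the same route as the paper: existence and finiteness of the one-sided partial derivatives (and their monotonicity in the same coordinate) come from one-variable concavity via part c), non-negativity from part b), and the cross-coordinate monotonicity from the submodularity inequality (\ref{eq:c4}) applied to difference quotients before passing to the limit. The paper's proof merely compresses this, writing out only the (\ref{eq:c4}) step for $f^+_y$ "for example" and leaving the routine concavity facts implicit, which you have spelled out.
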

\begin{proof}
For example, if $u_1<u_2$, then
\begin{align*}
   f^+_y(u_1,v)&=\lim_{v'\to v+0} \frac{f(u_1,v')-f(u_1,v)}{v'-v} \\
          &\ge\lim_{v'\to v+0} \frac{f(u_2,v')-f(u_2,v)}{v'-v}=f^+_y(u_2,v),
\end{align*}
where the inequality holds by (\ref{eq:c4}).
\end{proof}

Similar reasoning gives
\begin{proposition}\label{prop:a3}
If $u\to u'-0$, then $f^+_x(u,v) \to f^-_x(u',v)$;
if $v\to v'+0$, then $f^+_y(u,v) \to f^+_y(u,v')$; and
similarly for other cases. \qed
\end{proposition}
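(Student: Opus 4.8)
The plan is to reduce the statement to the classical one-variable theory of one-sided derivatives of a concave function, applied separately in each coordinate. Fix the second coordinate $v$ and set $g(u)=f(u,v)$; by part c) of Definition \ref{def:abcd} the function $g$ is concave, and by Proposition \ref{prop:a2} its one-sided derivatives $g^+(u)=f^+_x(u,v)$ and $g^-(u)=f^-_x(u,v)$ exist and are non-increasing in $u$. The first assertion, that $f^+_x(u,v)\to f^-_x(u',v)$ as $u\to u'-0$, thus amounts to the familiar fact that the left limit of the right derivative of a concave function equals its left derivative; the remaining cases are identical after fixing the appropriate variable and are obtained by the same argument.

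First I would record the three-slope inequality: for a concave $g$ and $a<b<c$ one has $\frac{g(b)-g(a)}{b-a}\ge\frac{g(c)-g(a)}{c-a}\ge\frac{g(c)-g(b)}{c-b}$, so that secant slopes are non-increasing in each endpoint. Taking the appropriate limits gives, for $u<u'$, the sandwich $g^+(u)\ge\frac{g(u')-g(u)}{u'-u}\ge g^-(u')$, whence $g^+(u)\ge g^-(u')$ throughout. Since $g^+$ is non-increasing by Proposition \ref{prop:a2}, the limit $L=\lim_{u\to u'-0}g^+(u)$ exists and satisfies $L\ge g^-(u')$.

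For the reverse inequality I would fix a point $w<u'$ and take $u$ with $w<u<u'$. Because an incoming secant slope from the left dominates the right derivative, one has $g^+(u)\le\frac{g(u)-g(w)}{u-w}$. Letting $u\to u'-0$ and using the continuity of $f$ (which the section establishes, and which on the interior follows from separate concavity) gives $L\le\frac{g(u')-g(w)}{u'-w}$; finally letting $w\to u'-0$, the right-hand side tends to $g^-(u')$ by the definition of the left derivative, so $L\le g^-(u')$. Combining the two bounds yields $L=g^-(u')$, as claimed.

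The main obstacle is precisely this reverse inequality: the bound $L\ge g^-(u')$ is immediate from monotonicity of $g^+$, whereas $L\le g^-(u')$ requires the two-stage limit through the auxiliary point $w$ and hence the continuity of $f$. The right-continuity statement $f^+_y(u,v)\to f^+_y(u,v')$ as $v\to v'+0$ is the companion fact that the right derivative of a concave function is right-continuous, proved by the same sandwiching of $g^+(v)$ between secant slopes; the other one-sided variants follow by symmetry, which is what the phrase ``and similarly for other cases'' abbreviates.
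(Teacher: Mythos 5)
Your proof is correct and is essentially the argument the paper intends: the paper states this proposition with no written proof at all (just ``Similar reasoning gives'' followed by a \qed), treating it as the classical one-variable fact about one-sided derivatives of concave functions, which is exactly the reduction you carry out coordinate-by-coordinate. Your two-stage limit through the auxiliary point $w$ --- with the explicit appeal to continuity of $f$, which holds at interior points by separate concavity and is what the paper's claim genuinely requires --- correctly supplies the details the paper omits.
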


\begin{definition}\label{def:realize}
The rank function $f\in\G$ \emph{realizes} the access structure $\Gamma$ 
if at every internal boundary point
$(u,v)$ of $\Gamma$ (that is, when both $u$ and $v$ are positive) we have
\begin{equation}\label{eq:realize}
\begin{array}{r@{\;\ge\;}l}
  f^-_x(u,v) & 1+f^+_x(u,v), \\[3pt]
  f^-_y(u,v) & 1+f^+_y(u,v).
\end{array}\end{equation}
\end{definition}

We consider only access structures which are defined by the graph
of a continuous, strictly decreasing curve $\alpha$ such that
$\alpha(0)=a>0$ and $\alpha(b)=0$ for some $b>0$. The point $(x,y)$ is
qualified if either $x\ge b$, or if $x\ge 0$ and $y\ge \alpha(x)$. In this
case internal points of the boundary are the points $(x,\alpha(x))$ for
$0<x<b$.

\begin{lemma}\label{lemma:cont1}
Suppose $\alpha$ is as above, it is derivable everywhere and $f\in\G$ satisfies the 
constraints {\upshape(\ref{eq:realize})} in internal points of the graph
of $\alpha$. Then
$f^+_x(0,0)\ge \sup\,\{-\alpha'(v):0<v<b\}$, where $\alpha'$ is the
derivative of $\alpha$.
\end{lemma}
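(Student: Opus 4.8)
The plan is to localise the bound: I will show that at every internal boundary point the left horizontal derivative already dominates the local steepness of $\alpha$, and then push this down to the origin by monotonicity.

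Fix $v_0\in(0,b)$, write $s=-\alpha'(v_0)>0$, $w_0=\alpha(v_0)$ and $P=(v_0,w_0)$; it suffices to prove $f^-_x(P)\ge s$, since taking the supremum over $v_0$ then gives the claim. The transport to the origin is immediate: by Proposition \ref{prop:a2} the right derivative $f^+_x$ is non-increasing in both coordinates, and by Proposition \ref{prop:a3} $f^-_x(P)=\lim_{x\to v_0-0}f^+_x(x,w_0)$, so
\[ f^+_x(0,0)\ \ge\ f^+_x(0,w_0)\ \ge\ f^-_x(P). \]

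The heart is a small-rectangle computation straddling the curve at $P$. For small $\delta>0$ set $w'=\alpha(v_0+\delta)$ and $P'=(v_0+\delta,w')$, and consider the rectangle with corners $(v_0,w')$, $P'$, $P$ and $(v_0+\delta,w_0)$: the lower-left corner $(v_0,w')$ is unqualified, the upper-right corner $(v_0+\delta,w_0)$ is qualified, and $P,P'$ lie on the graph of $\alpha$. Because $f$ is a genuine function, the increment $f(v_0+\delta,w_0)-f(v_0,w')$ computed along the path through $P$ equals the one computed through $P'$. Each of the four edge increments is the integral of a one-sided derivative of $f$, which by the separate concavity in Definition \ref{def:abcd} is monotone along its edge and, by Proposition \ref{prop:a3}, converges to the corresponding one-sided derivative at $P$ as $\delta\to0$ (the two inner edges $(v_0,w')\to P$ and $(v_0,w')\to P'$ feed the unqualified-side derivatives $f^-_y(P)$ and $f^-_x(P)$, the two outer edges feed $f^+_x(P)$ and $f^+_y(P)$). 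Using derivability of $\alpha$, $w_0-w'=s\delta+o(\delta)$; dividing the path identity by $\delta$ and letting $\delta\to0$ yields
\[ s\,f^-_y(P)+f^+_x(P)\ =\ f^-_x(P)+s\,f^+_y(P), \]
that is, the horizontal jump equals $s$ times the vertical jump, $f^-_x(P)-f^+_x(P)=s\bigl(f^-_y(P)-f^+_y(P)\bigr)$.

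To finish, the realization constraint (\ref{eq:realize}) gives a vertical jump $f^-_y(P)-f^+_y(P)\ge 1$, hence $f^-_x(P)-f^+_x(P)\ge s$; since $f^+_x(P)\ge0$ by Proposition \ref{prop:a2}, we obtain $f^-_x(P)\ge s$, as needed. The one delicate point, and the main obstacle, is the passage to the limit: one must check that each edge increment divided by the edge length is squeezed between its two monotone endpoint values, both of which tend to the same one-sided derivative at $P$ by Proposition \ref{prop:a3}, and that these derivatives are finite (which holds since $P$ is interior and $f$ is concave in each coordinate). All four estimates then collapse to the displayed identity, and in fact only the inequality ``horizontal jump $\ge s\cdot$ vertical jump'' is actually required.
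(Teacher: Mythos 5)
Your global strategy (prove the local bound $f^-_x(P)\ge s$ at each internal boundary point, then transport it to the origin by the monotonicity of $f^+_x$) is exactly the paper's, and your transport step and the limits for the two edges incident to $P$ (left vertical $\to f^-_y(P)$, top horizontal $\to f^+_x(P)$) are fine. The gap is in the other two edges. Their difference quotients involve one-sided partial derivatives at points whose \emph{other} coordinate is moving: the right vertical edge ratio is squeezed between $f^-_y(v_0+\delta,w_0)$ and $f^+_y(P')$ with $P'=(v_0+\delta,\alpha(v_0+\delta))$, and you need it to converge to $f^+_y(P)$. Proposition \ref{prop:a3} gives no such thing: it only controls a one-sided derivative when its own variable tends to a limit with the other coordinate \emph{fixed}. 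For the cross limit you need, concavity and submodularity pull in opposite directions ($P'$ lies to the right of $P$, which pushes $f^+_y$ down, and below $P$, which pushes it up), so no convergence can be inferred. This is precisely the difficulty the paper's proof is engineered to avoid: it fixes the \emph{lower-right} curve corner $(v,\alpha(v))$, applies the constraint (\ref{eq:realize}) at the \emph{moving} upper-left corner $(u,\alpha(u))$, and uses submodularity, $f^+_y(u,\alpha(u))\ge f^+_y(v,\alpha(u))$, to move that term onto the fixed line $x=v$, after which every limit taken is a same-variable limit covered by Proposition \ref{prop:a3}.

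The step is not merely unjustified; it is false, and so is the identity (and the one-sided inequality) you derive from it. Take $\alpha(x)=1-x$ on $[0,1]$ (so $s=1$ at every boundary point) and
$$
f(u,v)=\tfrac34+\min\bigl\{\,2u+2v,\;\tfrac52u+3v-\tfrac34,\;u+v+1,\;u+\tfrac32v+\tfrac34\,\bigr\}.
$$
One checks that $f\in\G$ (each pair of affine pieces has coefficient differences of equal signs, so $f$ is submodular; it is pointed, non-decreasing, coordinate-concave) and that $f$ realizes $\Gamma$: for $t<\tfrac12$ the crease at $(t,1-t)$ is between the first and third pieces (both partial derivatives drop by $1$), for $t>\tfrac12$ between the second and fourth (both drop by $\tfrac32$). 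At $P=(\tfrac12,\tfrac12)$ all four pieces meet and
$$
f^-_x(P)=\tfrac52,\qquad f^+_x(P)=1,\qquad f^-_y(P)=3,\qquad f^+_y(P)=1,
$$
so the horizontal jump is $\tfrac32$ while $s$ times the vertical jump is $2$: your identity fails, and so does the inequality you say is ``actually required''; your argument would conclude $f^-_x(P)\ge f^+_x(P)+s\bigl(f^-_y(P)-f^+_y(P)\bigr)=3$, which is false. The failure is located exactly where predicted: just to the right of $P$ the qualified-side piece is $u+\tfrac32v+\tfrac34$, so your right vertical edge has ratio identically $\tfrac32$ for every $\delta>0$, and it does not converge to $f^+_y(P)=1$; the qualified-side derivative $f^+_y$ is discontinuous along the curve at $P$. (The conclusion of the Lemma of course still holds here, $f^-_x(P)=\tfrac52\ge 1$, but it cannot be reached by your route; the paper's asymmetric treatment of the two curve corners is what makes the argument sound.)
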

\begin{proof}
Let $0\le u<v$, in this case $\alpha(v)<\alpha(u)$. Then
$$
    \left[v-u\right] f^+_x(u,\alpha(v)) \ge
    f(v,\alpha(v))-f(u,\alpha(v))
$$
We also have
\begin{align*}
   f(v,\alpha(u))-f(u,\alpha(u)) &\ge 0, \\[2pt]
   f(u,\alpha(u))-f(u,\alpha(v)) &\ge \left[\alpha(u)-\alpha(v)\right]
         f^-_y(u,\alpha(u)) \ge {} \\
          &\ge
\left[\alpha(u)-\alpha(v)\right]\left(1+f^+_y(u,\alpha(u))\right),
\end{align*}
and
$$
   \left[\alpha(u)-\alpha(v)\right] f^+_y(v,\alpha(v)) \ge
   f(v,\alpha(u))-f(v,\alpha(v)).
$$
Adding them up we get
$$
   \left[v-u\right]f^+_x(u,\alpha(v))
    \ge
   \left[\alpha(u)-\alpha(v)\right]
   \left(1+f^+_y(u,\alpha(u))-f^+_y(v,\alpha(v))\right).
$$
Now $f^+_y$ is non-increasing in both directions, thus 
$f^+_y(u,\alpha(u))\ge f^+_y(v,\alpha(u))$, which means
$$
  f^+_x(u,\alpha(v)) \ge \frac{\alpha(u)-\alpha(v)}{v-u}
    \left[1+f^+_y(v,\alpha(u))-f^+_y(v,\alpha(v))\right].
$$
Limiting $u\to v-0$ the left hand side becomes $f^-_x(v,\alpha(v))$,
and on the right hand side we have $f^+_y(v,\alpha(u))\to
f^+_y(v,\alpha(v))$. Therefore $f^-_x(v,\alpha(v))\ge -\alpha'(v)$, which
immediately gives the claim.
\end{proof}

Recall that the complexity of the rank function $f\in\G$ is $\max\{ 
f^+_x(0,0), f^+_y(0,0)\}$.

\begin{theorem}\label{thm:cont0}
Let $\alpha$ is strictly decreasing, derivable everywhere,
$\alpha(0)=a>0$ and $\alpha(b)=0$ for some $b>0$. 
The complexity of every $f\in\G$ realizing the access structure defined
by $\alpha$ is at least
$$
   \sup\,\{ -\alpha'(v),\; -1/\alpha'(v) :~ 0<v<b \}.
$$
\end{theorem}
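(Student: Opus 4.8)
The plan is to reduce everything to Lemma~\ref{lemma:cont1} by exploiting the symmetry of the whole setup under the coordinate swap $(x,y)\mapsto(y,x)$. Lemma~\ref{lemma:cont1} applied directly already gives one half of the claim,
$$
   f^+_x(0,0)\ge\sup\,\{-\alpha'(v):0<v<b\},
$$
so it will remain to bound the other partial derivative $f^+_y(0,0)$ from below by $\sup\{-1/\alpha'(v):0<v<b\}$. Since the complexity of $f$ is $\max\{f^+_x(0,0),f^+_y(0,0)\}$ and the supremum of a union of two sets equals the larger of their separate suprema, combining the two bounds will yield exactly $\sup\{-\alpha'(v),-1/\alpha'(v):0<v<b\}$.

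To obtain the second bound I would set $g(x,y)=f(y,x)$ and check that $g$ again lies in $\G$: conditions a)--d) of Definition~\ref{def:abcd} are all symmetric in the two arguments, the submodular inequality~(\ref{eq:c4}) in particular being unchanged by swapping coordinates. Since $g^\pm_x(x,y)=f^\pm_y(y,x)$ and $g^\pm_y(x,y)=f^\pm_x(y,x)$, the realizing conditions~(\ref{eq:realize}) for $g$ at a point $(s,t)$ coincide with those for $f$ at $(t,s)$. As $f$ realizes the curve $y=\alpha(x)$, it follows that $g$ realizes the reflected curve, namely the graph of the inverse function $\beta=\alpha^{-1}$, which is strictly decreasing with $\beta(0)=b$ and $\beta(a)=0$ and hence meets the hypotheses of Lemma~\ref{lemma:cont1} with the roles of $a$ and $b$ interchanged.

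Applying Lemma~\ref{lemma:cont1} to the pair $(g,\beta)$ then gives
$$
   f^+_y(0,0)=g^+_x(0,0)\ge\sup\,\{-\beta'(v):0<v<a\}.
$$
By the inverse function rule $\beta'(v)=1/\alpha'(\alpha^{-1}(v))$, so substituting $w=\alpha^{-1}(v)$, which ranges over $(0,b)$ while $v$ ranges over $(0,a)$, turns the right-hand supremum into $\sup\{-1/\alpha'(w):0<w<b\}$, as wanted.

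The step I expect to need the most care is verifying the hypothesis of Lemma~\ref{lemma:cont1} for $\beta$, namely that $\beta=\alpha^{-1}$ is derivable everywhere; this requires $\alpha'(v)\neq0$, i.e. $\alpha'<0$ strictly, throughout $(0,b)$. If $\alpha'$ were to vanish at an interior point, then $-1/\alpha'$ would be unbounded near it and the asserted lower bound would be $+\infty$, a degenerate situation already flagged by the presence of the term $-1/\alpha'(v)$ in the statement and handled either by a direct argument or by reading the claim as forcing unbounded complexity. Apart from this regularity issue the whole proof is a formal transfer of Lemma~\ref{lemma:cont1} across the coordinate-swap symmetry.
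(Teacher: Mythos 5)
Your proof is correct and takes essentially the same route as the paper: the paper also obtains $f^+_x(0,0)\ge\sup\{-\alpha'(v):0<v<b\}$ from Lemma~\ref{lemma:cont1} and then gets the bound on $f^+_y(0,0)$ ``by symmetry that exchanges the arguments of $f$ and $\alpha$ with its inverse $\alpha^{-1}$,'' which is precisely the coordinate-swap transfer you carry out. Your explicit verification that the swapped function lies in $\G$ and realizes the reflected curve, together with the caveat about $\alpha'$ possibly vanishing (where $\alpha^{-1}$ fails to be derivable), merely fills in details the paper leaves implicit.
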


\begin{proof}
The inequality $f^+_y(0,0) \ge \sup\{-1/\alpha'(v): 0<v<b\}$
is equivalent to 
$$
  f^+_x(0,0)\ge \sup\{-\alpha'(v):0<v<b\},
$$
proved in Lemma \ref{lemma:cont1}, by symmetry that exchanges the arguments of 
$f$ and $\alpha$ with its inverse $\alpha^{-1}$. Consequently the maximum of
the two $\sup$s is a lower bound on the complexity.
\end{proof}

The bound provided by Theorem \ref{thm:cont0} is tight when $\alpha$ is
linear, it is attained by $f(u,v)=\min\{c_u\,u+c_v\,v,M\}$ for some positive
$c_u$, $c_v$, $M$. It is interesting to note that the proof of Lemma
\ref{lemma:cont1} used only the local behavior of $f$ at the curve points
$(u,\alpha(u))$ without considering any global accumulation effect. It would
be interesting to know whether this is typical or not. There are many other
open questions, like: when the above bound is tight; whether the infimum is
attained, or uniquely attained; whether the solution depends continuously on
$\alpha$; what is the relation between the discrete and continuous cases;
etc.

\section*{Acknowledgment}
The work of the first author was partially supported by the ERC Advanced Grant ERMiD. The work of third author was supported by 
the Spanish Government
under Project PID2019-109379RB-I00.

%\bibliographystyle{plain}
%\bibliography{22}

\end{document}